\title{The Impact of Memory Models on Software Reliability \\ in Multiprocessors
\titlenote{This paper is a full version of an extended abstract that appears in PODC 2011.}
}
\newfont{\myeaddfnt}{phvr8t at 10pt}
\author{
%
%
\alignauthor Alexander Jaffe \\
       \affaddr{University of Washington} \\
       \email{\myeaddfnt{ajaffe@cs.washington.edu}}
\alignauthor Thomas Moscibroda \\
       \affaddr{Microsoft Research} \\
       \email{\myeaddfnt{moscitho@microsoft.com}}
\and
\alignauthor Laura Effinger-Dean \\
       \affaddr{University of Washington} \\
       \email{\myeaddfnt{effinger@cs.washington.edu}}
\alignauthor Luis Ceze \\
       \affaddr{University of Washington} \\
       \email{\myeaddfnt{luisceze@cs.washington.edu}}
\alignauthor Karin Strauss \\
       \affaddr{Microsoft Research} \\
       \email{\myeaddfnt{kstrauss@microsoft.com}}
}
\newcommand{\squishlist}{
   \begin{list}{$\bullet$}
    { \setlength{\itemsep}{0pt}      \setlength{\parsep}{3pt}
      \setlength{\topsep}{3pt}       \setlength{\partopsep}{0pt}
      \setlength{\leftmargin}{1.5em} \setlength{\labelwidth}{1em}
      \setlength{\labelsep}{0.5em} } }
\newcommand{\squishend}{
    \end{list}  }
\newtheorem{thm}{Theorem}[section]
\newtheorem{lemma}[thm]{Lemma}
\newtheorem{claim}[thm]{Claim}
\newtheorem{corollary}[thm]{Corollary}
\newcommand{\ceil}[1]{\lceil #1 \rceil }
\newcommand{\floor}[1]{\lfloor #1 \rfloor }
\newcommand{\E}{\mathbb{E}}
\renewcommand{\epsilon}{\varepsilon}
\newcommand{\lval}{\gamma}
\newcommand{\lvalvec}{\bar{\lval}}
\newcommand{\lvar}{\Gamma}
\newcommand{\lvarvec}{\bar{\lvar}}
\newcommand{\rgam}[1]{\lvalvec^{#1}}
\newcommand{\rsig}[1]{\sigma^{#1}}
\newcommand{\msf}{\mathsf}
\newcommand{\ld}{\msf{LD}}
\newcommand{\st}{\msf{ST}}
\newcommand{\prgm}{S}
\newcommand{\permi}[2]{\pi_{#1}(#2)}
\newcommand{\permiinv}[2]{\pi_{#1}^{-1}(#2)}
\newcommand{\plen}{m}
\begin{document}

\maketitle

\begin{abstract}
The memory consistency model is a fundamental system property characterizing a multiprocessor. The relative merits of strict versus relaxed memory models have been widely debated in terms of their impact on performance, hardware complexity and programmability. This paper adds a new
dimension to this discussion: the impact of memory models on
software reliability. By allowing some instructions to reorder, weak
memory models may expand the window between critical memory operations.
This can increase the chance of an undesirable thread-interleaving,
thus allowing an otherwise-unlikely concurrency bug to manifest. To
explore this phenomenon, we define and study a probabilistic model
of shared-memory parallel programs that takes into account such
reordering. We use this model to formally derive bounds on the \emph{vulnerability} to concurrency bugs of different memory models. Our results show that for $2$ concurrent threads, weaker memory models do indeed have a higher likelihood of allowing bugs. On the other hand, we show that as the number of parallel, buggy threads increases, the gap between the different memory models becomes proportionally insignificant, and thus the importance of using a strict memory model diminishes.
%
%
\end{abstract}

\category{F.1.2}{Computation by Abstract Devices}{Modes of Computation}[parallelism and concurrency]
\category{G.3}{Prob\-a\-bil\-i\-ty and Statistics}{}[Stochastic processes]
\category{B.3.4}{Mem\-o\-ry Structures}{Reliability, Testing, and Fault-Tolerance}


\terms{Theory, Reliability}

\keywords{Memory consistency models, probabilistic analysis,
  sequential consistency, total store order, weak ordering, software
  reliability}

\section{Introduction}



A critically important property of a shared-memory multiprocessor is
its {\em memory consistency model}. There has been an enormous amount
of work on this subject, both in industry and academia. The
memory consistency model describes which values may be returned by a
load operation in a parallel or multi-threaded program. The strongest and most intuitive
model is {\em Sequential Consistency} (SC)~\cite{lamport79}. SC
imposes two requirements on the execution of parallel programs: first,
all processors must see the same {\em global order} of memory
operations, and second, the operations for a particular processor must
appear to execute in {\em program order}. This model is attractive for
its high level of programmability, but the strict constraints on
memory operation reordering rule out important optimizations such as
access buffering, pipelining, or dynamic scheduling, which improve
performance by hiding the latency of memory accesses. In order to
enable these aggressive optimizations, a wide variety of {\em relaxed
  memory models} have been proposed. Relaxed memory models allow the
reordering of certain types of memory operations at the cost of
increased programming complexity, since programmers need to explicitly
encode reordering restrictions to ensure correctness.




Historically, the vast literature on memory consistency models has
discussed a three-way trade-off between performance, hardware
complexity, and programmability. In this paper, we bring a new axis to
this discussion: {\em software reliability}. Software is inherently
unreliable, and is arguably becoming less reliable with pervasive
concurrency.  Concurrency bugs such as data races and deadlocks are
extremely common in practice, and can cause unexpected failures in
even production-level code.

In this paper, we investigate to what extent relaxed memory consistency
models further contribute to the unreliability of parallel software by
increasing the likelihood that concurrency bugs will manifest during
an execution. For this purpose, we study a new probabilistic model for the instruction reordering introduced by relaxed memory models, and analyze a canonical buggy program (specifically, an atomicity violation~\cite{qadeer03,     artho03, lu08}) with respect to this model. We compare three important memory consistency models: Sequential Consistency, Weak Ordering, and Total Store Order. We derive two interesting results for our model:
\squishlist
\item We show that for 2 (or any small constant number of) parallel threads, the bug is indeed more likely to manifest under weaker memory models. This is intuitive and follows from the following high-level argument: A typical concurrency bug, such as a data race, can manifest only during a short window of time. The reordering of operations caused by relaxed memory models may increase the size of this critical window, thus making the bug more likely to manifest. In the paper, we give precise bounds on this \emph{vulnerability} of the three memory models.
\item On the other hand, we show that as the number of parallel, buggy threads increases, the gap between the different memory models shrinks in proportion to the risk for even the strongest memory model. This implies that \emph{as the number of parallel threads in the system increases, the importance of using a strict memory model diminishes} (with regard to the software reliability metric we study in this paper).
\squishend

Notice that the latter result could have far-reaching implications on the choice of memory consistency models in future multi-core and massively parallel systems. Intuitively, one might expect that with more and more concurrent threads, stronger memory consistency models should be used in order to counter the generally increased likelihood of bugs. However, our results indicate that the opposite is the case: As the number of threads increase, the relative importance of having stronger memory models reduces to a minimum. The underlying reason is that the larger number of threads causes the likelihood that bugs occur to increase much more quickly than what even the strictest memory model is able to contain. That is, the asymptotic growth fundamentally works against using strict memory models as we increase the number of threads.

The technical content of our paper proceeds as follows. In Section~\ref{sec:model}, we introduce two
distinct random processes, each of which is a natural object of inquiry in
isolation. By combining them---treating the output of the first process as
the input to the second---we model the end-to-end behavior of program
execution. This allows us to answer our central question: how does the
probability that a canonical data race manifests vary across memory
models and quantity of threads?

The first process models the generation of a random program,
and the subsequent randomized reordering of instructions. Specifically, in Section~\ref{sec:critwindow}, we derive the
probability that a certain essential window of vulnerability between two
instructions widens. The second process enacts a random series of shifts on
a set of heterogenous segments of the integer line. We use the positions of these line segments
to model the interleaving of the vulnerable windows of the threads. In Section~\ref{sec:shift}, we
estimate the probability that each of these segments is shifted to mutually
disjoint positions. Finally, the two processes are combined together in Section~\ref{sec:together}
to derive overall bounds on the probability of bug manifestation, first for two threads, then for
a large number of threads. Due to lack of space, several proofs are omitted and deferred to the 
\ifthenelse{\boolean{fullversion}}
{appendix.}
{full version of this paper.}

\pagebreak

\section{Background \& Related Work}
\label{sec:background}

\vspace{-1.5mm}

\subsection{Memory Consistency Models}

Memory models are a key aspect of the hardware/software interface in
shared-memory multicore/mul\-ti\-pro\-ces\-sor systems. They determine
what values read memory operations are allowed to return by dictating
how memory operations are allowed to be reordered, as well as when
writes become visible to other processors. They have major
implications on the performance, design complexity and programmability
of multiprocessor systems and the programs that run on them.  Common
misunderstandings about memory models often lead to bugs that are very
difficult to find and fix, and can also lead to major performance
issues.  There exists a vast and rich line of literature on memory
models (a good tutorial overview is presented in~\cite{adve96}). Most
of the past work has focused on new memory
models~\cite{gharachorloo90, adve90, goodman89}, hardware
implementations~\cite{gharachorloo91a, gniady99, ceze07}, memory
models for popular languages such as Java~\cite{manson05} and
C++~\cite{cppmm}, and compiler optimizations~\cite{pensieve} and their relative merits~\cite{adve96, arvind06}.



\vspace{1mm}\noindent\textbf{Relaxed memory models:} The strongest memory model is
Lamport's \emph{Sequential Consistency} (SC)~\cite{lamport79}. In order to
enable important performance optimizations, a number of relaxed memory
models have been proposed in the literature, with varying degrees of
guarantees. One of the strongest examples is known as \emph{Total
  Store Order} (TSO)~\cite{sparcmanual}. In TSO, loads may execute
before stores that precede them in program order, as long as no data
dependency is violated. All other pairs of instructions must maintain
strict program order. This model encapsulates the natural case in
which stores are observed by remote processors in program order. Some
stores may take extra time to be observed after their execution, but
the local program is allowed to proceed. A similar, but slightly
weaker consistency model is \emph{Partial Store Order}
(PSO)~\cite{sparcmanual}, which also allows the reordering of stores
with respect to each other as long as they access distinct memory
locations. A significantly weaker consistency model is \emph{Weak
  Ordering} (WO)~\cite{dubois86, adve90}. The opposite extreme from
Sequential Consistency, WO allows any memory operations to reorder
with one another, as long as no data dependencies are violated. This
model allows for an equal amount of optimization as a uniprocessor,
but is also the most vulnerable to programmer error, since it requires
explicit {\em fences} to prevent unwanted reorderings.  Modern
processors typically support relaxed models. For example, the x86
memory model~\cite{X86AMD, X86intel} supports a model similar to TSO
and the IBM POWER architecture supports a form of WO.

The above memory consistency models follow a pattern: they can be defined by a subset
of the four ordered memory operation pairs, specifying which pairs are
allowed to reorder: For example, in the WO model, any two memory operations are allowed to be reordered;
in SC, no two memory operations are allowed to be reordered; and in the TSO model, no two memory
operations are allowed to be reordered, except that loads can reorder
before stores (see Table~\ref{table:basicmodels}).

\begin{table}[t!]
\begin{scriptsize}
\begin{center}
\begin{tabular}{|c|c|c|c||c|}
  \hline
  $\st/\st$ & $\st/\ld$ & $\ld/\st$ & $\ld/\ld$  & Name \\
  \hline
  & & &  & Sequential Consistency \\
  & X & &  & Total Store Order \\
  X & X & &  & Partial Store Order \\
  X & X & X & X  & Weak Ordering \\
  \hline
\end{tabular}
\caption{
  Important memory models. A ``X'' in column $\st/\ld$ means that the
  ordering restriction from stores to later loads can be relaxed,
  i.e., loads can complete before stores that precede them in program
  order.  With regard to our model in
  Section~\ref{sec:reordering_model}, this means that a $\ld$ can
  settle past (swap with) a preceding $\st$. Other columns are
  analogous. \vspace{-6mm}} \label{table:basicmodels}
\end{center}
\end{scriptsize}
\end{table}


Note that since in this paper we analyze a concurrency bug
involving multiple threads, we ignore store atomicity~\cite{arvind06},
which is tangential to our present analysis.
Moreover, we do not currently handle {\em fence} operations explicitly,\footnote{However, our shift process
in Section~\ref{sec:shift} can be used to simulate a behavior similar to that arising from the use of fences. } which
are used to restrict reorderings and are typically used for
synchronization. For that reason, we do not consider models such as Release Consistency
(RC)~\cite{gharachorloo90}, which differs mainly in the types of
fences supported. As we discuss in Section~\ref{sec:discussion}, it will be interesting
to extend our process to distinguish such memory models.

\subsection{Race Conditions}
\label{sec:background_races}

A common type of bug in shared-memory multithreaded programming is a
\emph{race condition}, which occurs when correctness depends on an
assumption about the order in which instructions from two or more
threads interleave. In particular, an \emph{atomicity
  violation}~\cite{qadeer03} occurs when the programmer assumes that
multiple instructions will execute as an atomic unit, but fails to
insert the proper synchronization. A recent study showed
that atomicity violations are extremely common in ``real world''
programs~\cite{lu08}.
Race conditions are often difficult to identify due to nondeterminism:
the program may behave correctly most runs, but fails only for
specific thread interleavings.


A canonical example of an atomicity violation is as follows:
\begin{center}
  \begin{small}
    \sf
      \begin{tabular}{l|l}
        Thread 1 &  Thread 2 \\ \hline
        1: int loc = x; &  1: int loc = x; \\
        2: loc = loc + 1; & 2: loc = loc + 1; \\
        3: x = loc; & 3: x = loc; \\
      \end{tabular}
  \end{small}
\end{center}
Here \textsf{x} is a shared variable (with \textsf{x = 0} initially)
and \textsf{loc} is local to each thread. Two threads simultaneously
try to increment \textsf{x} by loading its value into a local
variable, incrementing that local variable, then storing the updated
value back to \textsf{x}.  The programmer's intent is that $\mathsf{x
  = 2}$ after both threads finish executing. However, the program has
a race condition that can result in the spurious outcome $\mathsf{x =
  1}$.  For instance, suppose that the two threads interleave as
follows: (1) Thread~1 executes Lines~1 and 2; (2) Thread~2 executes
Lines~1 and 2; (3) Thread~1 executes Line~3; (4) Thread~2 executes
Line~3. This interleaving produces the final result $\mathsf{x = 1}$.
We say that the bug \emph{manifests} because the result did not match
programmer intent.

The standard solution for race conditions like the example above is to
protect the variable \textsf{x} with a lock. However, locking
protocols can be extremely complicated in large programs, and in practice,
a concurrency bug may easily slip past even the most experienced
programmers. Note that such bugs can manifest in any memory model, even
Sequential Consistency.


\section{Model}
\label{sec:model}


Our goal is to study how the use of different memory models
impacts the likelihood of an error occurring given a canonical atomicity
violation. In this section, we describe a model that allows us to formally analyze these likelihoods.
It is a probabilistic model of parallel
program executions under memory models that may permit reordering. At a
high level, we consider two or more threads which execute a simple program
containing an atomicity bug. The program consists of basic
memory operations (stores and loads). Depending on the memory model
under consideration, the
operations in each thread are then independently reordered via a random process
we call the \emph{settling process}. Finally, we use a thread interleaving model---the \emph{shift model}---
to model the execution of the program by
interleaving the instructions of different threads. The probability of the
bug manifesting is determined by analyzing how the operations from the
threads interleave. We show in this paper that, when
executing two threads, this probability crucially depends on the
underlying memory model. Yet, perhaps counter-intuitively, we show
that as the number of threads grows larger, the relative difference between the memory models becomes
smaller and smaller.

\begin{figure*}
  \includegraphics[width=\textwidth]{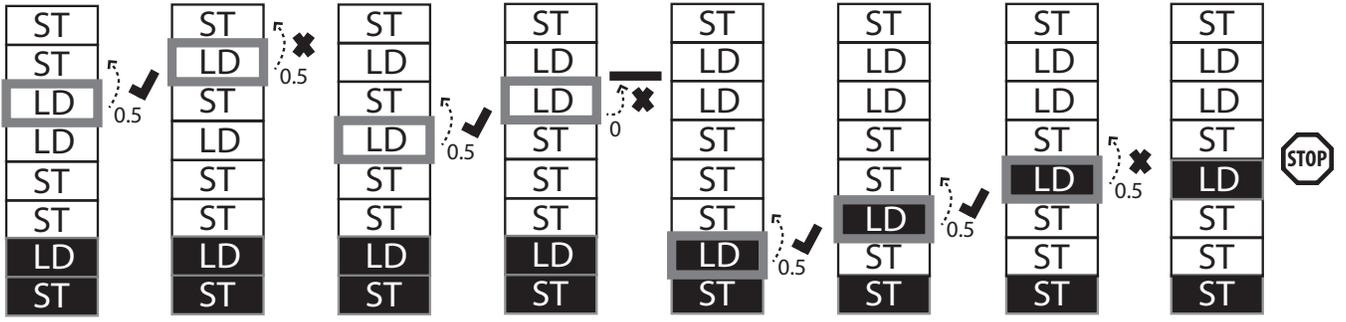}
  \caption{ An instantiation of the settling process under TSO. $\ld$s
    repeatedly settle upward with probability 1/2. If they fail to
    settle, or encounter another $\ld$, they stop permanently, and the
    next-lowest $\ld$ begins. The black boxes represent the critical
    instructions. The grey outlines indicate the currently settling
    instruction. The bottom four instructions in the final order form
    the \emph{critical window}.}
  \label{fig:settling}
\end{figure*}



\subsection{Program Model}
\label{sec:program_model}

We first describe a process for modeling a typical, randomly reordered
program. The process proceeds in two phases: program generation and
program reordering.

\subsubsection{Program Generation} \label{sec:generation}
We model an initial program based on the canonical atomicity violation bug
described in \S\ref{sec:background_races}. The program is a sequence
$S$ of memory operations $x_1$, $x_2$, $\ldots$, $x_{\plen}$,
$x_{\plen+1}$, $x_{\plen+2}$, where each $x_i$ has \emph{type} $\tau(x_i) \in
\{ \ld, \st \}$. $x_{\plen+1}$ and $x_{\plen+2}$ are Lines~1 and~3 of the
canonical bug, respectively. Since we are only concerned with memory
operations, we omit Line~2 (which accesses only the local variable
\textsf{loc}), and we will use the terms \emph{instruction} and
\emph{memory operation} synonymously in this
paper. We assume for simplicity that that only $x_{\plen+1}$ and $x_{\plen+2}$
access the same location.\footnote{If two instructions access the same
  location, they cannot reorder, so this assumption simplifies our
  analysis.} We will call $x_{\plen+1}$ the \emph{critical load} and
$x_{\plen+2}$ the \emph{critical store}. An \emph{initial program order}
$S_0$ starts with a random sequence of $\plen$ independently distributed
$\ld$ and $\st$ operations; $\tau(x_i)=\st$ with probability $p$ and
$\ld$ with probability $1-p$.
Furthermore, for convenience in the analysis, it will be useful to
approximate a very long program by letting $m \to \infty$.

\subsubsection{Instruction Reordering: The Settling Process}
\label{sec:reordering_model}

Different memory models allow for different forms of instruction
reorderings. We model this relaxation of program order using a
probabilistic \emph{settling process}.  This random process models
instruction reordering by taking a (random) initial program order as
input, and producing a reordering of that initial program. The
settling process takes into account which kinds of reorderings are
allowed by the memory consistency model under consideration, and
generates a random program order that is allowed to occur given the
kinds of reorderings.  In this section, we give an informal
description of the settling process; a formal definition is given in
\ifthenelse{\boolean{fullversion}}
{Appendix~\ref{app:modeldef}.}
{the full paper.}
Figure~\ref{fig:settling} presents a visualization of the settling
process.

Given an initial program order $\prgm_0$, the settling process
proceeds in $\plen+2$ rounds. In the $r$th round, (1) the program order
$\prgm_{r-1}$ from the end of the $(r-1)$st round is taken as the
input, and (2) the $r$th instruction is \emph{settled} in this program
order, which (3) creates the new program order $\prgm_{r}$. The final
output of the settling process is the program order $\prgm_{\plen+2}$
after settling the critical store $x_{\plen+2}$. Settling
the $r$th instruction in round $r$ of the process works as
follows. Instruction $x_r$ is recursively reordered (that is, swapped
in the current program order) with its preceding instruction
(initially, this is the instruction at position $r-1$), until a
reordering ``fails,'' in which case $x_r$ remains at its current
position in the program order. A reordering always fails if the memory
consistency model does not allow two operations of this type to be
reordered.  Otherwise, the reordering succeeds with some fixed
probability $s$, and fails with probability $1-s$.\footnote{A more
  general form of the settling model allows different nonzero
  probabilities for different kinds of reorderings, depending on the
  types of memory operations involved. For example $s_{\ld,\ld}$ can
  be different from $s_{\ld,\st}$, even if both are nonzero.} When a reordering fails, we
move onto the next round.

For ease of exposition, we will set both probabilities
$p$ (from program model) and $s$ to be $1/2$ in subsequent sections. However, note that
as long as $s$ and $p$ are constant, the key theorems and conclusions derived in this paper
remain fundamentally the same (though some of the numerical values change somewhat).

\vspace{1mm}

\textbf{Examples:} In SC, no instructions are allowed to be
reordered; hence $S_{\plen+2} = S_0$. In WO, all types of reorderings
are allowed, so, starting from instruction 2 in the initial program
order, each instruction is settled using a series of swaps with its
preceding instructions, until with probability $1-s$ a swap
fails. Then the next instruction is settled, and so forth. TSO relaxes
only the $\st \to \ld$ ordering, which in our model implies that a
$\ld$ may reorder with a preceding $\st$ with probability $s$, but all
other types of reorderings fail.

\vspace{1mm}

We will represent the result of a settling process by a permutation
on the indices. For thread $k$,
$\pi^{(k)}(i): [1, 2, \ldots, \plen+2] \rightarrow [1, 2, \ldots, \plen+2]$
maps the instruction starting at position $i$ to its final settled
position.

The settling process has two key features: (1) memory model
constraints are enforced (two operations can reorder only if allowed
by the memory model), and (2) reorderings that \emph{are} allowed
occur with a fixed likelihood. One effect of the latter property is
that in the final program order, most
instructions will not to move too far from their position in the
initial program order. The critical property of a memory consistency
model that we seek to capture is the \emph{degree to
  which individual instructions can reorder beyond other instructions,
  and thus move further away from their original position.}

\subsection{Thread Interleaving Model}
\label{sec:interleaving-model}


We describe a second high-level random process, which is used to
determine the interleaving of $n$ threads when they are executed
simultaneously on a multiprocessor. In fact, the process is quite
general, and may be of independent interest as a probabilistic
model. We first describe it in the abstract, then discuss how it will
be used to determine the effect of the program model's output on the
probability of bug manifestation.
\begin{defn} \label{def:shift} Consider a sequence of $n$ positive
  line segments originating at 0, having integer lengths $\bar{\gamma}
  = \gamma_1, \ldots, \gamma_n$. A \emph{shift process} translates the
  segments by i.i.d.\ geometric random variables $s_1, \ldots,
  s_n$. Then the random event of interest, called $A(\bar{\gamma})$,
  is the event that the segments are shifted such that all are
  mutually disjoint. That is,
  \[
  A(\bar{\gamma}) := [s_i, s_i + \gamma_i] \cap [s_j, s_j + \gamma_j] = \emptyset \ \forall \ i \neq j.
  \]
\end{defn}

In Section \ref{sec:shift}, we will analyze the probability of $A(\bar{\gamma})$ for
arbitrary segment lengths $\bar{\gamma}$. However, to connect this model to the task
at hand, we will go on to think of these segment lengths as the \emph{critical
windows} of reordered programs generated by the program model.

Recall that we study a canonical data race, for which correct execution
requires that each thread's pair of critical $\ld$ and critical $\st$ be
executed atomically. We thus refer to the sequence of instructions between
the critical $\ld$ and $\st$ (inclusively) as the \emph{critical window} of a
thread. We let $B_{\gamma}^{k}$ be the event that the final ordering of
thread $T_k$ inserts $\gamma$ instructions between the critical $\ld$ and $\st$,
(sometimes referred to as the \emph{critical window growth} of a memory model).
Manifestation of the data race corresponds exactly to the
event that when the reordered threads are executed in parallel, some pair of
critical windows are \emph{not} executed disjointly. We let $A$ refer to the event
that critical windows are disjoint. One can then think of $\Pr[B_{\gamma}^k]$
and $\Pr[A]$ as the two fundamental values we seek to characterize in this paper
- each a measure of the vulnerability of a memory model to this canonical data
race.

The shift model is used to simulate the parallel execution of the critical
windows of each thread, under the following assumptions.  All threads are
assumed to initially be identical copies of a single program, generated
randomly as in Section \ref{sec:generation}. Each thread is then independently
reordered according the process of Section \ref{sec:reordering_model}. We then
simulate the parallel execution of the reordered threads by placing the final
instruction of each critical window
the origin of the number line (here representing time in reverse, with 0 being the final
time step of execution), and using the shift model
of Definition \ref{def:shift} to model the \emph{varying
rates of execution} of each thread. After shifting, the execution of
each instruction is assumed to take one unit of time; instructions begin
and end synchronously across all threads, in lock-step. We assume that instructions
instantaneously read the current state of the
system at the beginning of the time step, and instantaneously commit
their changes at the end of the time step. In this way we ensure a clear
semantics for the state of the system at any given time: when a $\ld$
executes, it observes all the effects of any $\st$ that completed in a
time step preceding it.

We can now observe the circumstances in which a data
race manifests. There must be two threads such that,
subsequent to reordering, the final regions of time steps between the
critical $\ld$ and $\st$ (inclusive) overlap with one another.
In this case the data race must manifest, because one of the $\ld$s must
observe a value after (or simultaneous to) the other $\ld$ being observed,
but before the other $\st$ has committed.

A formal definition and a graphical visualization of the shift process
is in
\ifthenelse{\boolean{fullversion}}
{Appendix~\ref{sec:interleavingdef} (see Figure~\ref{fig:shift}).}
{the full paper.}

\section{The Critical Window}
\label{sec:critwindow}

In this section, we study what is perhaps the core component of our random
process, and the only one that directly distinguishes the memory models:
the reordering of instructions within an individual thread. In particular,
we are interested in the final distribution of the size of the
\emph{critical window} between the critical $\ld$ and $\st$. For the extreme memory
models of Sequential Consistency and Weak Ordering, we are easily able to
exactly characterize this distribution. The bulk of the technical challenge of
this section (and consequently of later sections) is in establishing results
for the more subtle model, Total Store Order. By carefully conditioning on
several auxiliary random variables, lower bounding complex algebraic terms
by their low-indexed values, and utilizing a bound on the
\emph{partition number} of certain integers, we derive rather sharp
approximations for the distribution of the critical window size.
These bounds will in subsequent sections be plugged into derived formulae for
the probability of bug manifestation, as a function of the thread
\emph{interleaving} process. Though the results in this section are tailored
specifically to the thread generation and reordering processes specified in
the previous section, it is worthwhile to observe how the asymptotics of the
overall bug manifestation probability will not depend delicately on the details
of this process.

We will be estimating the critical window growth, $\Pr[B^{k}_{\gamma}]$,
for a select set of memory models. Recall that $B^{k}_{\gamma}$ is the event that
the thread $T_k$ inserts $\gamma$ instructions between the critical $\ld$ and $\st$
in reordering. Because we will be considering a single fixed
thread in this subsection, we will refer to the event $B^{k}_{\gamma}$ by $B_{\gamma}$,
and the permutation $\pi^{(k)}$ by $\pi$. The first two memory models can be
considered a warmup, for the substantially more challenging case of Total Store
Order. All of these results are captured in the following theorem.

\begin{thm} \label{thm:crit} The critical window growth behaves
  according to the following functions:

\squishlist

\item \textbf{Sequential Consistency:} \label{thm:crit-seq}
\begin{equation*}
\Pr[B_{\gamma}] =
\begin{cases}
1 & \text{if } \gamma = 0,\\
0 & \text{if } \gamma > 0.
\end{cases}
\end{equation*}

\item \textbf{Weak Ordering:} \label{thm:crit-weak}
\begin{equation*}
\Pr[B_{\gamma}] =
\begin{cases}
2/3 & \text{if } \gamma = 0,\\
(2^{-\gamma})/3 & \text{if } \gamma > 0.
\end{cases}
\end{equation*}

\item \textbf{Total Store Order:} \label{thm:crit-total}
\begin{equation*}
\Pr[B_{\gamma}] =
\begin{cases}
2/3 & \text{if } \gamma = 0,\\
(6/7) \cdot 4^{-\gamma} + R(\gamma) \cdot 2^{-\gamma} & \text{if } \gamma > 0,
\end{cases}
\end{equation*}
for non-negative approximation term $R(\gamma) \leq \frac{2}{21}$.

\squishend

\end{thm}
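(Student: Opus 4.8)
The plan is to turn each memory model into an explicit distribution for the number $\gamma$ of instructions ending up strictly between the critical $\ld$ and the critical $\st$, and then compute or bound that distribution. Throughout I set $p=s=1/2$, let $m\to\infty$, and use that the two critical instructions share a location and hence can never reorder: the critical $\st$ stays at the very bottom of the order, while the critical $\ld$ rises during its own settling round. For \emph{Sequential Consistency} every swap fails, the order is unchanged, and the critical $\ld$ is immediately followed by the critical $\st$, so $\gamma=0$ deterministically. For \emph{Weak Ordering} every adjacent pair may reorder, and I would argue in two steps: the critical $\ld$ first rises past a geometric number $J$ of instructions ($\Pr[J=j]=2^{-(j+1)}$), pushing them between itself and the $\st$; then the critical $\st$ rises, clearing $\min(G,J)$ of them for an independent geometric $G$ before it either fails a swap or reaches the immovable $\ld$. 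The leftover is $\gamma=\max(J-G,0)$, and the elementary two-geometric computation gives $\Pr[J\le G]=2/3$ and $\Pr[J=G+c]=2^{-c}/3$, matching the claim.

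\emph{Total Store Order} is the real work. Now only a $\ld$ can rise, and only past a $\st$; stores never move and loads never pass loads, so each class keeps its relative order. Consequently the critical $\ld$ climbs exactly over the maximal block of stores directly above it after the first $m$ instructions have settled; writing $N$ for the length of that block, $\gamma=\min(G,N)$ for an independent geometric $G$. Tracking the bottom store-block as instructions settle one at a time gives the Markov recursion $N\mapsto N+1$ (a new store, probability $1/2$) or $N\mapsto\min(G,N)$ (a new load caps the block, probability $1/2$), whose stationary law is the distributional fixed point $N \stackrel{d}{=} T+\min(G,N')$ with $T,G$ geometric and $N'$ an independent copy. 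The balance equation at $0$ gives $\Pr[N=0]=1/3$, and since $\gamma=0$ exactly when $G=0$ or $N=0$ this yields $\Pr[\gamma=0]=2/3$. For $c\ge1$, independence gives the exact identity
\begin{equation*}
\Pr[\gamma=c] = 2^{-c}\,\Pr[N=c] + 2^{-(c+1)}\,\Pr[N>c].
\end{equation*}

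It remains to control $\pi_c:=\Pr[N=c]$ and its tail. The fixed point yields the recursion $\pi_c\bigl(1-2^{-(c+1)}\bigr) = \frac12\pi_{c-1} + 2^{-(c+2)}\Pr[N>c]$, whose solution has asymptotic ratio $\pi_c/\pi_{c-1}\to 1/2$. The explicit term $\frac67 4^{-c}$ arises as a clean lower bound: I would show that $2^{c}\pi_c$ is minimized over $c\ge1$ at its lowest index, where $\pi_1=2/7$, so $\pi_c\ge \frac47 2^{-c}$ and therefore $\Pr[N>c]\ge\frac47 2^{-c}$. Substituting these into the identity gives $\Pr[\gamma=c]\ge \frac47 4^{-c}+\frac27 4^{-c}=\frac67 4^{-c}$, i.e.\ $R(c)\ge0$. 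This is the ``lower bounding of complex algebraic terms by their low-indexed values'' that produces the leading constant.

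The main obstacle is the matching upper bound $R(c)\le 2/21$, which needs an upper estimate on how far $\pi_c$ and its tail exceed the geometric lower bound. Here I would unroll the fixed point into the nested form $N = T_0 + \min(G_1,\, T_1 + \min(G_2,\, T_2 + \cdots))$ and expand $\Pr[N=c]$ as a sum over the ways the budget $c$ can be distributed among the nested layers of stores before a geometric cap becomes binding. The number of such configurations is a partition/composition count of $c$, so bounding it by the standard sub-exponential estimate on the partition number $p(c)$ — weighed against the exponentially small per-configuration probability — collapses the sum to a uniform constant. Reconciling the exact low-index values (for instance $R(1)=1/21$ and $R(2)=41/630$) with this asymptotic envelope, so that the single bound $2/21$ holds for every $c\ge1$, is the delicate part of the argument.
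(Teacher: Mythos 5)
Your plan is correct in outline, and for the hard case (TSO) it takes a genuinely different route from the paper: where the paper conditions on the number of loads interspersed among the bottom stores, lower-bounds a partition-number quantity by $1$, and checks monotonicity of an auxiliary function by differentiation (its Lemma~\ref{lm:boundonlmu}), you track the bottom store-block length $N$ of the settled prefix as a Markov chain and characterize its limit law by the fixed point $N \sim T+\min(G,N')$. Your SC and WO computations are correct and coincide with the paper's (your two-geometric calculation is exactly the paper's conditioning on the resting position of the critical $\ld$). Your TSO ingredients also check out against the paper: $\pi_0=1/3$, $\pi_1=2/7$, the identity $\Pr[\gamma=c]=2^{-c}\pi_c+2^{-(c+1)}\Pr[N>c]$, the recursion $\pi_c(1-2^{-(c+1)})=\frac12\pi_{c-1}+2^{-(c+2)}\Pr[N>c]$, and the exact values $R(1)=1/21$, $R(2)=41/630$ are all consistent with the paper's bounds (the paper's lower bound $\Pr[L_1]\ge\frac47\cdot 2^{-1}$ is tight, matching your $\pi_1=2/7$). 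Moreover, the step you defer---that $2^c\pi_c$ is minimized over $c\ge 1$ at $c=1$---is immediate from your own recursion: multiplying it by $2^c$ gives $2^c\pi_c\,(1-2^{-(c+1)})=2^{c-1}\pi_{c-1}+\frac14\Pr[N>c]$, and since $\Pr[N>c]\ge 0$ the sequence $2^c\pi_c$ is non-decreasing, so $\pi_c\ge\frac47\,2^{-c}$ and hence $R(c)\ge 0$ with leading term $\frac67\,4^{-c}$ follows with no further work.

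The genuine gap is the upper bound $R(c)\le \frac{2}{21}$, which is half of the TSO claim. Your proposed route---unrolling the fixed point, counting configurations by partition numbers, and ``reconciling'' low-index values with an asymptotic envelope---is not yet an argument: a sub-exponential bound on the partition number against exponentially small per-configuration probabilities can give \emph{some} uniform constant, but nothing in that plan ties the constant to $\frac{2}{21}$, and you yourself flag the reconciliation as open. The paper closes this step (its Step~6) with a short mass-conservation argument that is fully available inside your framework: writing $r_c:=\pi_c-\frac47\,2^{-c}\ge 0$ for $c\ge1$, the facts $\sum_c \pi_c=1$ and $\pi_0=\frac13$ force $\sum_{k\ge 1} r_k = 1-\frac13-\frac47=\frac{2}{21}$, and $\Pr[N>c]=\frac47\,2^{-c}+\sum_{k>c}r_k$. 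Substituting into your identity,
\begin{align*}
\Pr[\gamma=c] &= 2^{-c}\pi_c + 2^{-(c+1)}\Pr[N>c]
= \frac{6}{7}\,4^{-c} + 2^{-c}\Bigl(r_c + \frac12 \sum_{k>c} r_k\Bigr)
\le \frac{6}{7}\,4^{-c} + \frac{2}{21}\,2^{-c},
\end{align*}
since $r_c+\frac12\sum_{k>c}r_k\le\sum_{k\ge1}r_k$. This is precisely the paper's worst-case allocation of the residual mass, and it gives $0\le R(c)\le\frac{2}{21}$ uniformly in $c$. Replace your partition-number plan with this two-line accounting and your proof is complete---and, for the lower bound, arguably cleaner than the paper's.
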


Observe that the critical window grows at vastly different rates across
the models. Up to lower-order terms, the probability of a window size $\gamma$
is $2^{-\gamma}$ in Weak Ordering, $(2^{-\gamma})^2$ in Total Store Order,
and $0$ in Sequential Consistency. It remains to be seen in later sections
the extent to which this window size effects bug manifestation.

\begin{proof}[(Theorem \ref{thm:crit}---Sequential Consistency)]
~ \\
Under sequential consistency, no instruction is ever allowed to reorder.
Hence $\Pr[B_0] = 1$, and $\Pr[B_{\gamma}] = 0 \ \forall \gamma \neq 0$.
\end{proof}

We next consider the case of intermediate difficulty: Weak Ordering.

\begin{proof}[(Theorem \ref{thm:crit}---Weak Ordering)]
%
~ \\
Under weak ordering, all four ordered pairs of instruction types are allowed
to pass one another. Recall that we assume a strong normal form, in which all
possible swaps occur with probability 1/2. Hence in weak ordering, each
subsequent instruction continually moves up with probability 1/2, until it
ever fails to swap. This applies to the critical load and critical store as
well, with the exception that the critical store will never pass the critical
load, (because they access the same address). To calculate the probability,
we condition on the resting position of the critical $\ld$, which entails
a given resting position for the critical $\st$, for any $\gamma > 0$.
\begin{align*}
  \Pr[B_{\gamma}] & = \Pr[\pi(\plen+2) - \pi(\plen+1) = \gamma + 1] \\
  & = \sum_{i = \gamma}^{\infty} \Pr[\pi(\plen+1) = \plen+1-i] \\
  & \quad \quad \quad \cdot \Pr[\pi(\plen+2)=\plen +2 - i + \gamma | \\
    & \quad \quad \quad \quad \quad \quad \pi(\plen+1) = \plen+1-i] \\
  & = \sum_{i = \gamma}^{\infty} 2^{-(i+1)} 2^{-(i+1-\gamma)} 
 \;=\; \frac{2^{-\gamma}}{3}.
\end{align*}
We must handle the case of $\gamma = 0$ separately, because here the
critical $\st$ stops moving ``automatically,'' when it runs up against
the critical $\ld$.
\begin{align*}
\Pr[B_{0}] &= \sum_{i = 0}^{\infty} \Pr[\pi(\plen+1) = \plen+1-i] \\
& \quad \cdot \Pr[\pi(\plen+2)=\plen +2 - i | \pi(\plen+1) = \plen+1-i] \\
&= \sum_{i = 0}^{\infty} 2^{-(i+1)} 2^{-(i)}  \;=\; 2/3. \qed
\end{align*}
\end{proof}

Finally we turn to the far more challenging setting of Total Store Order.

\begin{proof}[(Theorem \ref{thm:crit}---Total Store Order)]
%
%
  ~ \\
  One of the strongest and most commonly used relaxed memory models,
  Total Store Order (TSO) only permits loads to swap with
  stores. Hence in calculating the distribution of window size, we
  need only consider the number of stores located directly before the
  critical load. Those stores will never move themselves, and the
  critical load can never swap past the first load above it. Moreover,
  the critical store never swaps with anything, so its final position
  is fixed.

However, deriving bounds on
$\Pr[B_{\gamma}]$ is difficult. \emph{$\ld$ operations may reorder
  past $\st$ operations, thus pushing longer sequences of $\st$
  operations together.} In this section we derive
bounds on the critical window growth for TSO, which is a core
technical contribution of this paper.
%
%
The proof is quite involved. Much difficulty arises
in gaining control over the relative positions of $\ld$s and $\st$s.
We outline the steps taken to estimate the critical window growth below.
The majority of these steps are non-trivial, and often involve a delicate
case analyses.

\paragraph{Proof Outline}
\squishlist
\item[1.] Express the critical window probability in terms of a series of new
random variables, $L_{\mu}$: the event that the second-to-last reordering
leaves exactly $\mu$ contiguous $\st$s above the critical $\ld$.
\item[2.] To calculate the probability of $L_{\mu}$, condition on the value of another series
of random variables, $\Psi_{\mu}$: the number of $\ld$s \emph{initially}
between the critical $\ld$ and the $\mu+1$th lowest $\st$.
\item[3.] Express the $\Psi_{\mu}$-conditioned probability of $L_{\mu}$ in terms of
the limit of the fraction of $\st$s near the bottom of a reordered thread, and
another probability, $\Pr[F_{\mu}|\Psi_{\mu} = q]$: the chance of $q$ $\ld$s all reordering
out of a region of \emph{at least} $\mu$ $\st$s.
\item[4.] To estimate $\Pr[F_{\mu}|\Psi_{\mu} = q]$, condition on a new random variable, $\Delta$:
the sum, over $\st$s, of the number of $\ld$s below each $\st$.
Express the probability of $\Delta$ in terms of the weighted sum of
several integer \emph{partition numbers}, and estimate these by a simple lower
bound.
\item[5.] After combining the above elements to bound the probability of
$L_{\mu}$, lower bound an ugly term of this expression by its value at
$\mu = 1$, checking via the derivative that this term is increasing
in $\mu$.
\item[6.] Use the lower bound on the probability of $L_{\mu}$ to finally lower
bound the probability of a given window size. To achieve an upper bound,
calculate the total probability not attributed to some $L_{\mu}$ in the
lower bound, and attribute it to the worst-possible case.
\squishend

We now move on to execute this plan in detail.

\vspace{2mm}\noindent\textbf{Step 1---Number of contiguous \boldmath{$\st$}s
  above the critical \boldmath{$\ld$}: } Recall that $\prgm_0$
($\prgm_{\plen+2}$) denotes the initial (final) instruction order, and that
$\prgm_{\plen}$ refers to the instruction order just \emph{before} the
critical load is settled. For convenience, we define the following basic
random events. Let $S_{\ld,i}(j)$ be the event that after the $j$th
instruction of $S_i$ is a $\ld$.
Furthermore, we define $S_{\ld,i}(j,k) = \bigwedge_{\ell=j}^k S_{\ld,i}(\ell)$
as the event that the entire contiguous range from $j$ to $k$ in $S_i$
consists of $\ld$s. $S_{\st,i}(j)$ and $S_{\st,i}(j,k)$ are
defined accordingly.

For $\mu \in \mathbb{N}$, we define $L_{\mu}$ as
the event that in $\prgm_{\plen}$, there are exactly $\mu$ $\st$
operations immediately preceding the critical $\ld$. In other words,
\begin{equation*}
L_{\mu} = S_{\ld,\plen}(\plen-\mu) \wedge
S_{\st,\plen}(\plen-\mu+1,\plen).
\end{equation*}
The critical $\ld$ may only move $\gamma$ positions if there are at
least $\gamma$ contiguous $\st$ operations above it. Hence for any
$\gamma$, we have
\begin{equation*}
\Pr[B_{\gamma}] = \sum_{\mu = \gamma}^{\infty}
\Pr[B_{\gamma} | L_{\mu}] \cdot
\Pr[L_{\mu}].\label{eq:vulnexpr}
\end{equation*}

Deriving $\Pr[B_{\gamma} | L_{\mu}]$ is straightforward. If $\mu = \gamma$,
we have $\Pr[B_{\gamma} | L_{\gamma}] = 2^{-\gamma}$, as the critical
$\ld$ must pass all $\gamma$ $\st$s. After that, it stops because the
next instruction is a $\ld$. For $\mu > \gamma$, we have
$\Pr[B_{\gamma} | L_{\mu}] = 2^{-(\gamma+1)}$, because the
instruction above the $\gamma$th $\st$ is also a $\st$. Hence there is
only a $1/2$ probability of the reordering completing when it reaches
that point.

It remains to derive bounds for $\Pr[L_{\mu}]$ for all $\mu$.
This is the primary technical lemma of the proof.

\begin{lemma}\label{lm:boundonlmu}
For any $\mu > 0$, $\Pr[L_{\mu}] \geq \frac{4}{7} \cdot 2^{-\mu}$.
Moreover, $\Pr[L_{0}] = 1/3$ exactly.
\end{lemma}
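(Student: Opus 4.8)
The plan is to exploit the rigid structure that TSO imposes on the settling process: $\st$s never move, $\ld$s never pass one another, and a $\ld$ advances upward only by swapping past $\st$s. Hence the relative order of the $\st$s (and of the $\ld$s) is preserved throughout, and $\prgm_\plen$ is just a merge of the fixed $\st$-subsequence and the fixed $\ld$-subsequence in which the loads have drifted upward. Under this view, $L_\mu$ is the event that the $\mu$ lowest $\st$s of $\prgm_0$ come to rest contiguously at the very bottom with a $\ld$ immediately above them, which happens exactly when every $\ld$ that initially lies below the $(\mu+1)$st lowest $\st$ reorders up and out of this bottom block of stores.

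First I would dispatch the exact value $\Pr[L_0]=1/3$ by a self-similar (fixed-point) argument. Write $\mu = a + T$, where $a$ is the number of $\st$s initially below the lowest $\ld$ (so $\Pr[a=t]=2^{-(t+1)}$) and $T$ is the number of $\st$s this $\ld$ swaps past when it settles. Because the gap above a load is enlarged by exactly the stores the next load has already pushed down, $T$ obeys a distributional fixed point $T = \min(H,\,g+T')$ with $H,g$ independent geometric and $T'$ an independent copy of $T$. Solving only for $\theta := \Pr[T=0]$ gives $\theta = \frac12 + \frac14\theta$, hence $\theta = 2/3$; since $a$ and $T$ are independent, $\Pr[L_0] = \Pr[a=0]\,\theta = \frac12\cdot\frac23 = \frac13$.

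For the lower bound on $\Pr[L_\mu]$ with $\mu>0$ I would use the conditioning suggested by the outline. Condition on $\Psi_\mu=q$, the number of $\ld$s initially below the $(\mu+1)$st lowest $\st$; a direct count gives the negative-binomial-type expression $\Pr[\Psi_\mu=q] = \binom{\mu+q}{\mu}2^{-(\mu+q+1)}$ (which indeed sums to $1$ over $q$). Given $\Psi_\mu=q$, the arrangement of the $q$ $\ld$s among the $\mu$ $\st$s is uniform, and for each fixed arrangement the escape event $F_\mu$ has a remarkably clean probability, namely $2^{-\Delta}$, where $\Delta$ is the number of (load-below-store) inversions. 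The reason is that settling can only decrease inversions, one per successful swap, and never increase them; a load that fails a swap or is blocked stops permanently, and no later (lower) load can repair an inversion left above it; hence the fully-sorted bottom block is reached precisely when all $\Delta$ necessary swaps succeed, giving $2^{-\Delta}$. Averaging over the uniform arrangements then turns $\Pr[F_\mu \mid \Psi_\mu=q]$ into a sum $\sum_{\Delta} p_{q,\mu}(\Delta)\,2^{-\Delta}$, where $p_{q,\mu}(\Delta)$ counts partitions of $\Delta$ into at most $q$ parts each at most $\mu$ — exactly the partition-number estimate called for in Step 4.

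Finally I would combine these pieces, summing $\Pr[\Psi_\mu=q]\,\Pr[F_\mu\mid\Psi_\mu=q]$ over $q$ (together with the limiting boundary factor that accounts for the fraction of $\st$s resting at the very bottom, which pins the block at exactly $\mu$), factor out the dominant $2^{-\mu}$, and bound the remaining $\mu$-dependent residual from below. The main obstacle lies in this last stretch: controlling $\Pr[F_\mu\mid\Psi_\mu=q]$ — the joint escape of $q$ mutually blocking loads — through the inversion/partition identity, and then showing the residual factor is nondecreasing in $\mu$ so that it may be replaced by its value at $\mu=1$. That this substitution is legitimate and already produces the stated constant is strongly indicated by the fact that the bound is tight at $\mu=1$: the fixed-point computation of the previous paragraph gives $\Pr[L_1] = 2/7 = \frac47\cdot 2^{-1}$ exactly, while for $\mu\ge 2$ the inequality is strict (for instance $\Pr[L_2] = 8/45 > 1/7$). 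I expect the delicate monotonicity verification — for example a derivative argument on the closed form of the residual — to be where the real work resides.
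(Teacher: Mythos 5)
Your proposal follows essentially the same route as the paper's proof: condition on the number $\Psi_{\mu}$ of interspersed $\ld$s, reduce the escape probability $\Pr[F_{\mu}\mid\Psi_{\mu}=q]$ to a weighted sum of partition numbers bounded below by $1$, factor out the dominant $2^{-\mu}$, and finish by showing the residual is increasing in $\mu$ so that its value $4/7$ at $\mu=1$ bounds all later values. Within that shared architecture you add two things the paper leaves implicit, and both are correct. First, your inversion argument---each successful swap removes exactly one load-below-store inversion, a failed or blocked load can never be repaired by a later one, hence $\Pr[F_{\mu}\mid \mathrm{arrangement}]=2^{-\Delta}$ exactly---is a cleaner justification of the identity the paper simply asserts when it writes $\Pr[F_{\mu}\mid\Psi_{\mu}=q]=\sum_{\delta}\Pr[\Delta=\delta]\,2^{-\delta}$. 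Second, the distributional fixed point $T=\min(H,\,g+T')$ is a genuinely nice device: it reproduces the recurrence behind Claim~\ref{claim:0} (giving $\Pr[L_0]=\tfrac12\cdot\tfrac23=\tfrac13$), and your exact values $\Pr[L_1]=2/7$ and $\Pr[L_2]=8/45$ both check out against that recursion, confirming tightness of the lemma's bound at $\mu=1$---a fact the paper never observes.

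Two caveats. The quantitative heart of the lemma---summing the series into the closed form $h(\mu)=\tfrac87-\bigl(1-2^{-(\mu+1)}\bigr)^{-1}+\tfrac{2/3}{1-2^{-(\mu+2)}}$ and verifying by differentiation that it is increasing, which is exactly what the paper's appendix does---is only named in your plan, not executed; by your own account this is ``where the real work resides,'' so what you have is an accurate blueprint rather than a complete proof. Also, watch the boundary bookkeeping: you define $\Psi_{\mu}$ relative to the $(\mu+1)$st lowest $\st$ (the paper's proof uses the $\mu$th, despite its outline), and under your convention the event ``exactly $\mu$'' is pinned by the clean factor $(1-2^{-q})$ alone (all $q$ loads clear the $\mu$th store but not all clear the $(\mu+1)$st), with no role for the limiting $2/3$ fraction of Claim~\ref{claim:0}; that $2/3$ belongs only to the paper's convention, where the block is capped by a settled rather than an original instruction. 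Your parenthetical ``limiting boundary factor'' conflates the two, and stacking the $2/3$ on top of $(1-2^{-q})$ would undercut the constant $4/7$. Done consistently, either bookkeeping yields a residual equal to $4/7$ at $\mu=1$ and increasing in $\mu$, so this is a repairable slip in accounting, not a flaw in the idea.
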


\begin{proof}
  We will approach this lemma by asking (1) how many $\ld$s are
  interspersed among the first $\mu$ $\st$s above the critical $\ld$,
  and (2) what is the probability that all of those $\ld$s settle such
  that we are left with $\mu$ contiguous $\st$s above the critical
  $\ld$. Because $\st$s cannot settle past $\ld$s in this model,
  nothing happens during rounds in which a $\st$ can move; the
  technical difficulty arises in the motion of the $\ld$s.

  \vspace{2mm}\noindent\textbf{Step 2---Number of interspersed
    \boldmath{$\ld$}s:} In the initial program order $\prgm_0$, let
  $\Phi_{\mu}$ refer to the position of the $\mu$th-lowest
  non-critical $\st$. Formally,
  \begin{equation*}
    \Phi_{\mu} = \min \{i : | \{j \geq i : S_{\st,0} (j) \} | = \mu + 1 \}.
  \end{equation*}
  Furthermore, let $\Psi_\mu$ refer to the number of $\ld$ operations
  above the critical $\ld$ but below the $\mu$th-lowest non-critical
  $\st$.  That is,
  \begin{equation*}
  \Psi_{\mu} = m+1 - \mu - \Phi_{\mu}.
  \end{equation*}
  Note that as the program length goes to infinity, the probability
  that such a $\Phi_{\mu}$ and $\Psi_\mu$ exist goes to 1.  Now we can
  express $\Pr[L_\mu]$ as
  \begin{equation}
    \Pr[L_{\mu}] = \sum_{q=0}^{\infty} \Pr[L_{\mu} | \Psi_{\mu} = q]
    \cdot \Pr[\Psi_{\mu} = q].\label{eq:lmunew}
  \end{equation}
  We have $\Pr[\Psi_{\mu} = q] = 2^{-\mu} 2^{-q} \binom{\mu+ q - 1}{q}$
  because there are $\binom{\mu + q - 1}{q}$ ways to build a
  string of $\mu$ $\st$s and $q$ $\ld$s such that the top instruction
  is a $\st$.

  \vspace{2mm}\noindent\textbf{Step 3---Probability of interspersed
    \boldmath{$\ld$}s settling out:} The difficult part of bound
  (\ref{eq:lmunew}) is $\Pr[L_{\mu} | \Psi_{\mu} = q]$. This is the
  probability that
\begin{enumerate}[(A)]
\item All $q$ $\ld$s between the $\st$ at $\Phi_{\mu}$ and the critical $\ld$ settle
up until they pass the $\st$ at $\Phi_{\mu}$,
\item but do not settle so far that the settled instruction above the $\st$ at
$\Phi_{\mu}$ is another $\st$. \label{enum:settletoofar}
\end{enumerate}
(\ref{enum:settletoofar}) is due to the fact that $L_{\mu}$ specifies
that there be \emph{exactly} $\mu$ $\st$s above the critical
$\ld$. The probability of (\ref{enum:settletoofar}) relies on the
instruction directly above $\Phi_{\mu}$ in $S_{\Phi_{\mu}-1}$. If it is a $\ld$,
then (\ref{enum:settletoofar}) holds automatically, since all the
$\ld$s must stop settling. However, if it is a $\st$, then
(\ref{enum:settletoofar}) only holds if not all of the q $\ld$s that
have passed the $\st$ at $\Phi_{\mu}$ also pass the next-highest
$\st$. Hence this is the first property on which we condition.
\begin{align*}
\Pr[L_{\mu} | \Psi_{\mu} = q] &= \Pr[L_{\mu} \wedge S_{LD,\Phi_{\mu}-1}(\Phi_{\mu}-1) | \Psi_{\mu} = q] \\
&+ \Pr[L_{\mu} \wedge S_{ST,\Phi_{\mu}-1}(\Phi_{\mu}-1) | \Psi_{\mu} = q].
\end{align*}
By Bayes' Law,
\begin{align*}
& \Pr[L_{\mu} \wedge S_{LD,\Phi_{\mu}-1}(\Phi_{\mu}-1) | \Psi_{\mu} = q] \\
&\quad = \Pr[S_{LD,\Phi_{\mu}-1}(\Phi_{\mu}-1) | \Psi_{\mu} = q] \\
&\quad\quad \cdot \Pr[L_{\mu} | S_{LD,\Phi_{\mu}-1}(\Phi_{\mu}-1) \wedge \Psi_{\mu} = q].
\end{align*}
We first consider the latter term.
Because the final instruction that settles above $\Phi_{\mu}$ will be a $\ld$
under these conditions, this depends only on the bottom $\mu$ instructions settled above
the critical $\ld$ being $\st$s.
For shorthand, let
\begin{equation*}
F_{\mu} = S_{ST,m}(m-\mu+1,m).
\end{equation*}
Then
\begin{equation*}
\Pr[L_{\mu} | S_{LD,\Phi_{\mu}-1}(\Phi_{\mu}-1) \wedge \Psi_{\mu} = q]
= \Pr[F_{\mu} |  \Psi_{\mu} = q].
\end{equation*}
In contrast, for $L_{\mu}$ to hold given $S_{ST,\Phi_{\mu}-1}(\Phi_{\mu}-1)$,
it does not suffice for the $q$ $\ld$s to move past $\Phi_{\mu}$. They must
also not all settle past the next highest instruction. They do so with probability
$2^{-q}$. Hence
\begin{multline*}
\Pr[L_{\mu} | S_{ST,\Phi_{\mu}-1}(\Phi_{\mu}-1) \wedge \Psi_{\mu} = q] = \\
\Pr[F_{\mu} |  \Psi_{\mu} = q] \cdot (1- 2^{-q}).
\end{multline*}
Putting these expressions together, we find that
\begin{align*}
& \Pr[L_{\mu} | \Psi_{\mu} = q] \\
&= \Pr[F_{\mu} |  \Psi_{\mu} = q] \cdot \Pr[S_{LD,\Phi_{\mu}-1}(\Phi_{\mu}-1)] \\
&\quad + \Pr[F_{\mu} |  \Psi_{\mu} = q] \cdot \Pr[S_{ST,\Phi_{\mu}-1}(\Phi_{\mu}-1)] \cdot (1- 2^{-q}) \\
&= \Pr[F_{\mu} |  \Psi_{\mu} = q] \cdot (1 - 2^{-q} \cdot (1 - \Pr[S_{ST,\Phi_{\mu}-1}(\Phi_{\mu}-1)])).
\end{align*}

  We first derive an exact value for $\Pr[S_{ST,i}(i)]$. Though it
  is difficult to determine the probability that a given instruction
  is a $\st$ in general, this particular value can be derived exactly
  through a recurrence relation.
    \begin{claim} \label{claim:0}
    \[
    \lim_{i \rightarrow \infty} \Pr[S_{ST,i}(i)] = 2/3.
    \]
    \end{claim}

    \begin{proof}
      After reordering stage $i$, instruction $i$ can be a ST in one
      of two ways. Either it can initially be a ST, (in which case it
      never reorders) or it can initially be a LD, the instruction
      above it can be settled as a ST, and the two can swap.  Hence
      \begin{equation*}
        \Pr[S_{ST,i}(i)] = \frac{1}{2} + \frac{1}{2} \cdot \Pr[S_{ST,i-1}(i-1)] \cdot \frac{1}{2}.
      \end{equation*}
      This is a recurrence relation of the form $X_i = b + aX_i$,
      which has the solution $X_i = \frac{b}{1-a} + a^{i-1} (X_1 -
      \frac{b}{1-a})$. Plugging in $X_1 = 1/2$, $a=1/4$, $b= 1/2$, we
      find
      \begin{align*}
        \Pr[S_{ST,i}(i)] &= \frac{1/2}{1-1/4} + (1/4)^{i-1} \left( 1/2 - \frac{1/2}{1-1/4} \right). \\
        &= 2/3 + (1/4)^{i-1} (1/2 - 2/3)
      \end{align*}
      The resulting probability is a function of $i$, but we are
      interested in the steady-state as the size of the program goes
      to infinity. Hence the second term falls out.
      \begin{equation*}
        \lim_{i \rightarrow \infty} \Pr[S_{ST,i}(i)] = 2/3. \qed
      \end{equation*}
    \end{proof}

    Now that we know the typical fraction of instructions near the
    bottom of the program that are $\st$s after reordering, we can
    derive a bound on $\Pr[F_{\mu}|\Psi_{\mu} = q]$.

\vspace{2mm}\noindent\textbf{Step 4---Estimating \boldmath{$\Pr[F_{\mu}|\Psi_{\mu} = q]$}:}

  \begin{claim}
    \label{claim:1}
    \begin{equation}
      \Pr[F_{\mu}|\Psi_{\mu} = q] \geq \frac{2^{-(q-1)} - 2^{-\mu q}}{\binom{\mu + q - 1}{q}}.\nonumber
    \end{equation}
  \end{claim}

  \begin{proof}
Everything in this proof is implicitly conditioned on the event
$\Psi_{\mu} = q$. Let the random variable
\begin{equation}
\Delta = \sum_{\Phi_{\mu} < i \leq \plen : \tau_{\ld,0}(i)} |\{\Phi_{\mu} \leq j < i
: \tau_{\st,0}(j)\}|\nonumber
\end{equation}
represent the total number of positions that $\ld$s from $\Phi_{\mu}$ to $\plen$
must move up, in order to leave a sequence of $\mu$ $\st$s immediately
above the critical $\ld$.  It must be that $\Delta \geq q$, because
at least instruction $\Phi_{\mu}$ is a $\st$, and $\Delta \leq \mu q$, because no
$\ld$ can be required to pass more than $\mu$ $\st$s. With this
definition, we may write $\Pr[F_{\mu}|\Psi_{\mu} = q] = \sum_{\delta =
  q}^{\mu q} \Pr[\Delta = \delta] \cdot 2^{-\delta}$.
The exact value of $\Pr[\Delta = \delta]$ can be stated formally, but
not in a closed form. Namely, let $\phi(x,y,z)$ be the number of
distinct multi-sets of $y$ positive integers summing to $x$, such that
each integer is at most $z$. This is a variant on the much-studied
\emph{partition number} of $x$. Then $\phi(\delta,q,\mu)$ is exactly
the number of arrangements of $q$ $\ld$s and $\mu$ $\st$s (beginning
with a $\st$) such that $\delta$ is the sum of the number of $\st$s
above each of the $\ld$s. (For each $\ld$, we simply select how many
$\st$s to place it below---the relative order of the $\ld$s is
immaterial.) There are $\binom{\plen + q - 1}{q}$ total arrangements of
$\ld$s and $\st$s beginning with a $\st$. Hence
\begin{equation*}
\Pr[\Delta = \delta] = \frac{\phi(\delta,q,\mu)}{\binom{\mu + q - 1}{q}},
\end{equation*}
and
\begin{equation*}
\Pr[F_{\mu}|\Psi_{\mu} = q] = \sum_{\delta
= q}^{\mu q} \frac{\phi(\delta,q,\mu)}{\binom{\mu + q - 1}{q}}
\cdot 2^{-\delta}.
\end{equation*}

Simple forms for $\phi(x,y,z)$ are not known. Asymptotic results
exist, but are not helpful here because the terms with small
parameters have the largest contributions. However, to achieve a good
bound it suffices to show that $\phi(\delta,q,\mu) \geq 1$ when $q
\leq \delta \leq \mu q$.  To show that a partition exists that
achieves any number in this range, consider the following
construction. Set $\delta \textrm{ mod } q$ of the integers to
$\ceil{\delta / q}$, and set the rest of the integers to
$\floor{\delta / q}$. We can set the integers this large, because
$\delta / q \leq (\mu q) / q = \mu$. Then the chosen integers sum to
$(\delta \textrm{ mod } q) \ceil{\delta / q} + (q - (\delta \textrm{
  mod } q)) \floor{\delta / q}$ which can be shown to be exactly
$\delta$. Hence we may write 
\begin{equation*}
\Pr[F_{\mu}|\Psi_{\mu} = q] \geq \frac{1}{\binom{\mu + q - 1}{q}} \sum_{\delta = q}^{\mu q} 2^{-\delta}
	= \frac{2^{-(q-1)} - 2^{-\mu q}}{\binom{\mu + q - 1}{q}}. \qed
\end{equation*}
\end{proof}

Having derived a bound for $\Pr[F_{\mu}|\Psi_{\mu} = q]$, we are now
in a position to conclude the proof of
Lemma~\ref{lm:boundonlmu}. First note that $\Pr[L_0] = 1/3$, by Claim
\ref{claim:0}.  For values of $\mu$ greater than 0, Claim
\ref{claim:1} will be the central tool in the proof, which is left to
\ifthenelse{\boolean{fullversion}} {Appendix~\ref{app:critproofs}.}
           {the full version of the paper.} \qed
\end{proof}
The remainder of the proof of Theorem~\ref{thm:crit-total}, steps 5
and 6, is deferred to \ifthenelse{\boolean{fullversion}}
{Appendix~\ref{app:critproofs}.}  {the full version of the paper.}
\qed
\end{proof}

\section{Shift Process}
\label{sec:shift}

Here we discuss the next component of our analysis: a ``shift process'' meant to
capture the interleaving of reordered threads. We refer the reader back to the
definition in Section \ref{sec:interleaving-model}. This process is where the
critical windows derived from the reordering process come into effect.

In the analysis that follows, we assume that each critical window's
shift is distributed geometrically, representing the intuition that
threads are exponentially less likely to execute at progressively
increasing offsets from one another.  Let $\lvalvec = (\lval_{1},
\lval_{2}, \ldots, \lval_{n}) \in \mathbb{N}^n$ be a sequence of
integral ``segment lengths.'' In subsequent sections, $\lval_{k}$ will
be used to represent the length of the critical window of thread
$T_k$.  We define a shift process on $\lvalvec$ as follows. Consider
$n$ segments of the line, of lengths $\lval_{1}, \lval_{2}, \ldots,
\lval_{n}$, and let the starting point of each segment be shifted up
from 0 by an i.i.d.\ positive random variable $s_i$. We are interested
in the probability that the resulting set of shifted segments is
non-overlapping. In other words, we would like to bound
$\Pr[A(\lvalvec)]$, where $A(\lvalvec)$ is the event that $\forall i
\neq j \in \{1, 2, \ldots, n\}$, we have $[s_i, s_i + \lval_{i}] \cap
[s_j, s_j + \lval_j] = \emptyset$.

The following theorem states this probability precisely, and as such is not
particularly enlightening on its own. However, when the segment lengths
are random variables drawn from a well-understood distribution
(as they are in the case of reordered random threads), we will be able to
state the probability concisely.\vspace{-2mm}
\begin{thm} \label{thm:shift}
\begin{equation*}
\Pr[A(\lvalvec)] = \frac{2^{-(\binom{n+1}{2} - 1)}}{\prod_{i=1}^{n-1} (1-2^{-(n+1-i)})}
	\sum_{\sigma \in Sym_n} \prod_{i=1}^{n-1} 2^{-(n-i) \lval_{\sigma(i)}},\vspace{-1mm}
\end{equation*}
where $Sym_n$ is the symmetric group of degree $n$: the set of all permutations
on $n$ elements.
\end{thm}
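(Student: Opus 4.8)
The plan is to decompose the event $A(\bar{\gamma})$ according to the left-to-right order in which the shifted segments appear, and to evaluate the contribution of each order by a change of variables to ``gap'' coordinates. Throughout I take the shifts to be geometric with $\Pr[s_i = k] = 2^{-k}$ for integers $k \geq 1$, matching the powers of $2$ that appear in the statement.

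First I would observe that in any disjoint configuration the starting points $s_1, \ldots, s_n$ are necessarily distinct: if $s_i = s_j$ then both closed intervals contain that common point, so they intersect. Hence every outcome in $A(\bar{\gamma})$ has a well-defined strict ordering of its segments by starting position, given by some $\sigma \in Sym_n$ (with segment $\sigma(1)$ leftmost and $\sigma(n)$ rightmost). Moreover, for intervals sorted along the line, mutual disjointness is equivalent to disjointness of consecutive pairs only. Writing $t_i := s_{\sigma(i)}$ and $\ell_i := \gamma_{\sigma(i)}$, the event ``disjoint with order $\sigma$'' is exactly the integer region $t_1 \geq 1$ together with $t_{i+1} \geq t_i + \ell_i + 1$ for $1 \leq i \leq n-1$ (the $+1$ because closed integer intervals are disjoint iff the later one starts strictly after the earlier one ends). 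Since these events are mutually exclusive over distinct $\sigma$ and their union is $A(\bar{\gamma})$, we get $\Pr[A(\bar{\gamma})] = \sum_{\sigma} \Pr[\text{disjoint, order } \sigma]$, and it remains to compute one such term.

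For a fixed $\sigma$ I would introduce the nonnegative gap variables $g_0 := t_1 - 1$ and $g_i := t_{i+1} - t_i - \ell_i - 1$ for $1 \leq i \leq n-1$, which range freely over $\mathbb{N}_{\geq 0}$ and are in bijection with the admissible tuples $(t_1, \ldots, t_n)$. Solving back gives $t_j = j + \sum_{i=1}^{j-1}\ell_i + \sum_{i=0}^{j-1} g_i$, and summing over $j$ while swapping the order of summation yields the clean identity $\sum_{j=1}^n t_j = \binom{n+1}{2} + \sum_{i=1}^{n-1}(n-i)\ell_i + \sum_{i=0}^{n-1}(n-i)g_i$, in which each $\ell_i$ and each $g_i$ picks up the weight $(n-i)$ equal to the number of segments lying to its right. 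Since the joint probability factors as $\prod_i 2^{-t_i} = 2^{-\sum_j t_j}$, I can then sum the independent geometric series over $g_0, \ldots, g_{n-1}$ using $\sum_{g \geq 0} 2^{-(n-i)g} = (1 - 2^{-(n-i)})^{-1}$, obtaining $\Pr[\text{disjoint, order } \sigma] = 2^{-\binom{n+1}{2}} \cdot 2^{-\sum_{i=1}^{n-1}(n-i)\ell_i} \cdot \prod_{j=1}^{n}(1 - 2^{-j})^{-1}$.

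The main place to be careful — and the only real obstacle — is reconciling this constant prefactor with the one in the statement, whose product $\prod_{i=1}^{n-1}(1 - 2^{-(n+1-i)}) = \prod_{k=2}^{n}(1-2^{-k})$ runs only from $k=2$ to $n$, whereas my derivation produces $\prod_{j=1}^n(1-2^{-j})^{-1}$. The discrepancy is exactly the missing $j=1$ factor $(1 - 2^{-1})^{-1} = 2$, which I would absorb into the power of two via $2 \cdot 2^{-\binom{n+1}{2}} = 2^{-(\binom{n+1}{2} - 1)}$. Recalling $\ell_i = \gamma_{\sigma(i)}$ and summing over all $\sigma \in Sym_n$ then yields precisely the claimed formula. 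Finally I would sanity-check the answer at $n=1$ (it collapses to $1$, as it must, since a single segment is trivially disjoint) and at $n=2$ (it gives $\tfrac{1}{3}(2^{-\gamma_1} + 2^{-\gamma_2})$, which is readily confirmed by a direct two-segment calculation).
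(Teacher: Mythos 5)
Your proof is correct and arrives at the exact formula, constant included. Its opening move---writing $\Pr[A(\bar{\gamma})]$ as a sum over the permutation $\sigma \in Sym_n$ recording the left-to-right order of the shifted segments, these events being mutually exclusive because disjoint segments have distinct starting points---is precisely the paper's stated key insight (condition on the relative order of the shifts). Where the two arguments genuinely differ is in how the per-permutation term is evaluated. The paper conditions on the shift $\ell_1$ of the leftmost segment and invokes memorylessness of the geometric distribution to argue that the remaining $n-1$ segments constitute a fresh, independent instance of the same problem; this yields a recurrence in the number of segments, which is then unrolled into the product $\prod_{i=1}^{n-1} 2^{-(n+1-i)-(n-i)\gamma_{\sigma(i)}}\big/\big(1-2^{-(n+1-i)}\big)$. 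You instead parametrize all admissible configurations for a fixed order bijectively by nonnegative gaps $g_0,\dots,g_{n-1}$, use the identity $\sum_{j} t_j = \binom{n+1}{2} + \sum_{i=1}^{n-1}(n-i)\ell_i + \sum_{i=0}^{n-1}(n-i)g_i$, and sum $n$ independent geometric series in one shot. In substance this is the memorylessness argument unrolled into a change of variables, but it is more elementary---no conditioning, no recursion---and it makes the constants transparent: $\binom{n+1}{2}$ is literally $\sum_j j$, and the theorem's denominator $\prod_{i=1}^{n-1}\big(1-2^{-(n+1-i)}\big)$, together with the stray factor of $2$ in the numerator, is exactly your product of gap series $\prod_{j=1}^{n}\big(1-2^{-j}\big)^{-1}$. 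Two small discrepancies both resolve in your favor: your explicit $+1$ in $t_{i+1} \geq t_i + \ell_i + 1$ is cleaner than the paper's write-up, whose displayed conditioning event reads $s_{\sigma(j)} \geq \ell_1 + \gamma_{\sigma(1)}$ even though its arithmetic (the factor $\tfrac12 \cdot 2^{-(\ell_1+\gamma_{\sigma(1)})}$) implements the strict inequality that closed intervals require; and your convention of shifts supported on $k \geq 1$ rather than the paper's $k \geq 0$ is immaterial, since translating every segment by the same constant does not affect disjointness---as confirmed by your $n=2$ check, $\tfrac13\big(2^{-\gamma_1}+2^{-\gamma_2}\big)$, which matches the value the paper itself uses in its two-thread analysis.
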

The following corollary simplifies this expression:
%
\begin{corollary} \label{cor:shift}
For some $c(n) \in [2,4]$, 
\begin{equation*}
\Pr[A(\lvalvec)] =
	c(n) \cdot 2^{-\binom{n+1}{2}} \cdot
	\sum_{\sigma \in Sym_n} \prod_{i=1}^{n-1} 2^{-(n-i) \lval_{\sigma(i)}}.
\end{equation*} 
	In particular, $c(2) = \frac{8}{3}$ exactly. 
%

\end{corollary}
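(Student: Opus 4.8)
The plan is to read $c(n)$ off directly from the closed form in Theorem~\ref{thm:shift} and then bound it with a standard product inequality. The permutation sum $\sum_{\sigma \in Sym_n}\prod_{i=1}^{n-1} 2^{-(n-i)\lval_{\sigma(i)}}$ is identical in both statements, so the corollary is purely a claim about the scalar prefactor. First I would split off $2^{-\binom{n+1}{2}}$ using $2^{-(\binom{n+1}{2}-1)} = 2\cdot 2^{-\binom{n+1}{2}}$; matching the remaining factor against the corollary forces
\[
c(n) = \frac{2}{\prod_{i=1}^{n-1}\bigl(1-2^{-(n+1-i)}\bigr)}.
\]
Reindexing via $j = n+1-i$ (so $i=1,\dots,n-1$ corresponds to $j=n,\dots,2$) rewrites the denominator as $\prod_{j=2}^{n}(1-2^{-j})$, a finite truncation of a convergent infinite product.

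Next I would establish the two-sided bound. For the lower bound, each factor $(1-2^{-j})$ lies strictly in $(0,1)$, so the product is at most $1$ (and equals the empty product $1$ when $n=1$); hence $c(n)\ge 2$. For the upper bound I would invoke the Weierstrass product inequality $\prod_j(1-a_j)\ge 1-\sum_j a_j$, valid for $a_j\in[0,1]$, applied with $a_j = 2^{-j}$:
\[
\prod_{j=2}^{n}\bigl(1-2^{-j}\bigr)\ \ge\ 1-\sum_{j=2}^{n}2^{-j}\ =\ \frac{1}{2} + 2^{-n}\ \ge\ \frac{1}{2}.
\]
Dividing, this yields $c(n)\le 4$, so $c(n)\in[2,4]$ as claimed.

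Finally, the exact value at $n=2$ follows by direct evaluation: the denominator is the single factor $1-2^{-2} = 3/4$, so $c(2) = 2/(3/4) = 8/3$.

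I do not anticipate a genuine obstacle here; the corollary is essentially bookkeeping on the prefactor of Theorem~\ref{thm:shift} together with one elementary product inequality. The only points requiring care are getting the reindexing of the product correct and confirming that the Weierstrass inequality is applied over the range $j\ge 2$, so that $\sum_{j=2}^{\infty}2^{-j}$ is exactly $\frac{1}{2}$ and the lower bound on the product lands at $\frac{1}{2}$ rather than something smaller (which would weaken the upper bound on $c(n)$ past $4$).
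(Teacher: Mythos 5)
Your proposal is correct, and it follows the same skeleton as the paper's proof: read $c(n) = 2\big/\prod_{i=1}^{n-1}\bigl(1-2^{-(n+1-i)}\bigr)$ off Theorem~\ref{thm:shift}, reindex the product as $\prod_{j=2}^{n}(1-2^{-j})$, note the trivial bound (product $\le 1$, hence $c(n)\ge 2$), and get $c(2)=8/3$ by direct substitution. The one substantive step---showing the product is at least $\tfrac{1}{2}$, hence $c(n)\le 4$---is where you genuinely diverge. The paper writes $1-2^{-i} = \bigl(1+\tfrac{2^{-i}}{1-2^{-i}}\bigr)^{-1}$, applies $1+y\le e^{y}$, and bounds the resulting exponent sum to conclude $\prod_{i=2}^{n}(1-2^{-i}) \ge e^{-2/3} > \tfrac{1}{2}$. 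You instead invoke the Weierstrass product inequality $\prod_{j}(1-a_j)\ge 1-\sum_j a_j$ with $a_j = 2^{-j}$, giving $\prod_{j=2}^{n}(1-2^{-j}) \ge \tfrac{1}{2}+2^{-n} > \tfrac{1}{2}$. Your route is more elementary (the Weierstrass inequality is a one-line induction, versus the paper's chain of exponential estimates) and is tighter for small $n$; the paper's bound $e^{-2/3}\approx 0.513$ has the mild advantage of being uniform in $n$ rather than decaying to $\tfrac{1}{2}$. Both deliver exactly the needed conclusion $c(n)\in[2,4]$, so the difference is purely one of technique for bounding the truncated product.
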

The proof of the corollary is in \ifthenelse{\boolean{fullversion}}
{Appendix \ref{app:shiftproofs}.}  {the full version of the paper.}
We now turn to the proof of the main theorem. The challenge is to
characterize the probability that the next segment is shifted to a
position disjoint from all previous segments. At first glance, it is
difficult to handle the huge and diverse set of legal placements
for a set of segments. Our key insight is to condition on the
\emph{relative order} of the magnitude of the shifts.  We then
consider the probability that each segment is disjoint from the
previous threads in this order. In so doing, we are able to exploit
the memorylessness of the geometric distribution. Let $t$ be an
arbitrary segment, and $t'$ be the segment immediately preceding it in
this order. To understand the distribution of disjoint placements for
$t$, we need only know the distribution of the origin of $t'$.  Then
by assuming that the segments are disjoint, we can infer that the
origin of $t$ is distributed according the origin of $t'$, plus the
length of $t'$, plus an independent geometric random variable.
\begin{proof} [(Theorem \ref{thm:shift})]
  Let $s_i$ be a geometric random variable with expectation 2 (i.e.,
  $s_i = k$ with probability $2^{-(k+1)} \ \forall k \in
  \mathbb{N}$). In order to analyze the probability of $A(\lvalvec)$,
  we will take the following steps. We will first condition on the
  ordering of the segments. Then for a given ordering, we will use the
  memorylessness of the shift variables to calculate the probability
  of each successive segment being disjoint from each previous.

For a permutation $\sigma$ on $\{1, 2, \ldots, n\}$, let $Y_{\sigma}$ be the event that for all $i$,
the $i$th largest shift occurs on segment $\sigma(i)$. That is,
$s_{\sigma(1)} \geq s_{\sigma(2)} \geq \cdots \geq s_{\sigma(n)}$.
Then $\Pr[A(\lvalvec)] = \sum_{\sigma \in Sym_n} \Pr[A(\lvalvec) \wedge Y_{\sigma}]$.

We now analyze $\Pr[A(\lvalvec) \wedge Y_{\sigma}]$. We will refer to this event by $A(\lvalvec, \sigma)$.
For all segments to be disjoint, it must be the case that each segment
begins after the \emph{end} of every segment that began before
it. $\sigma$ captures exactly the order in which segments begin. So
disjointness means that for all $i$, $j$ s.t. $\sigma(j) > \sigma(i)$,
segment $j$ begins after the end of segment $i$. Hence for each $i$,
we may condition on the shift of the segment with the $i$th largest
shift, and consider the probability that each segment with a smaller
shift follows its completion.
\begin{gather*}
  \Pr[A(\lvalvec, \sigma)] = \sum_{\ell_1 = 0}^{\infty} \Pr[A(\lvalvec, \sigma) \wedge s_{\sigma(1)} = \ell_1] \\
  \begin{align*}
    &= \!\sum_{\ell_1 = 0}^{\infty} \Pr[A(\lvalvec, \sigma) \wedge s_{\sigma(1)} = \ell_1
      \wedge \bigwedge_{i = 2}^n s_{\sigma(i)} \geq \ell_1 + \lval_{\sigma(1)}] \\
    &= \sum_{\ell_1 = 0}^{\infty} \Pr[A(\lvalvec, \sigma) | s_{\sigma(1)} = \ell_1
      \wedge \bigwedge_{i = 2}^n s_{\sigma(i)} \geq \ell_1 + \lval_{\sigma(1)}] \\
    & \quad \quad \quad \cdot \Pr[s_{\sigma(1)} = \ell_1]
    \cdot \prod_{i = 2}^n \Pr[s_{\sigma(i)} \geq \ell_1 \!+\! \lval_{\sigma(1)}].
  \end{align*}
\end{gather*}
The third equality is due to the independence of the shift variables.
Let $\rgam{i}$ refer to the restriction of $\lvalvec$ to the segment
indices with the $n-i+1$ smallest shifts (i.e., $\rgam{i} =
\lvalvec_{|[n] \setminus \bigcup_{j=i}^n \sigma(j)}$). Similarly, let
$\rsig{i}$ refer to the restriction of $\sigma$ to the $n-i+1$
smallest shifts (i.e., $\rsig{i} = \sigma_{|[n] \setminus [i-1]}$).
We define these structures so that we can express the disjointness
event in terms of a new disjointness event on a smaller set of
unconditioned segments.  In particular, let $A(\rgam{i}, \rsig{i})$ be
the disjointness event for an independent random shift process on
segments $\sigma(i), \sigma(i+1), \ldots, \sigma(n)$, with permutation
$\rsig{i}$ pointing to the new indices of these segments.  We will see
that we are permitted to condition on such a prior event, because of
the memoryless of the shift variables.

Conditioned on the first segment being disjoint from all the following segments,
we need only consider the event $A(\lvalvec^{2},\sigma)$.
Then due to the memorylessness of the shifts, we have
\begin{multline*}
\Pr[A(\rgam{i}, \rsig{i}) | s_{\rsig{i}(1)}
	= \ell_1 \wedge \bigwedge_{j = 2}^n s_{\rsig{i}(j)} \geq \ell_1 + \lval_{\rsig{i}(1)}] \\
	= \Pr[A(\rgam{i+1}, \rsig{i+1})
	| \bigwedge_{j = 2}^n s_{\rsig{i}(j)} \geq \ell_1 + \lval_{\rsig{i}(1)}] \\
= \Pr[A(\rgam{i+1}, \rsig{i+1}) | \bigwedge_{j = 2}^n s_{\rsig{i}(j)} \geq 0] = \Pr[A(\rgam{i+1}, \rsig{i+1})].
\end{multline*}
We now observe a simple recurrence relation that defines $\Pr[A(\rgam{i}, \rsig{i})]$.
\begin{multline*}
  \Pr[A(\rgam{i}, \rsig{i})] = \sum_{\ell_1 = 0}^{\infty} \Pr[A(\rgam{i+1}, \rsig{i+1})]
  \cdot \Pr[s_{\rsig{i}(1)} = \ell_1]\\
  \cdot \prod_{j=i+1}^n \Pr[s_{\rsig{i}(j)} \geq \ell_1 + \lval_{\rsig{i}(1)}]
\end{multline*}
\begin{align*}
&= \sum_{\ell_1 = 0}^{\infty} \Pr[A(\rgam{i+1}, \rsig{i+1})]
\cdot \frac{1}{2} 2^{-\ell_1} \cdot \prod_{j=i+1}^n \frac{1}{2} \cdot 2^{-(\ell_1 + \lval_{\rsig{i}(1)})} \\
&= \sum_{\ell_1 = 0}^{\infty} \Pr[A(\rgam{i+1}, \rsig{i+1})]
\cdot 2^{-(\ell_1 + 1 + (n-i) (\ell_1 + \lval_{\rsig{i}(1)} + 1))} \\
&= 2^{-1 + (n-i) (\lval_{\rsig{i}(1)} + 1)}
\cdot \Pr[A(\rgam{i+1}, \rsig{i+1})] \sum_{\ell_1 = 0}^{\infty} (2^{-(n-i+1)})^{\ell_1} \\
&= \frac{2^{-1 + (n-i) (\lval_{\rsig{i}(1)} + 1)}}{1 - 2^{-(n-i+1)}}
\cdot \Pr[A(\rgam{i+1}, \rsig{i+1})].
\end{align*}

Moreover, it is clear that $\Pr[A(\lvalvec^{n}, \sigma^{n})] = 1.$ Then noting that
$\rsig{i}(1) = \sigma(i)$, the solution is trivial:
\begin{align*}
\Pr[A(\lvalvec^{1}, \sigma^{1})] &=
\prod_{i=1}^{n-1} \frac{2^{- (n+1-i) - (n-i) \lval_{\sigma(i)}}}{1 - 2^{-(n+1-i)}} \\
&=\frac{2^{-(\binom{n+1}{2} - 1)}}{\prod_{i=1}^{n-1} (1-2^{-(n+1-i)})}
		\cdot \prod_{i=1}^{n-1} 2^{-(n-i) \lval_{\sigma(i)}}.
\end{align*}
Finally, plugging these terms into the overall probability of
disjointness yields the expression in the theorem. We will use this
expression in the next section to calculate the probability of bug
manifestation.
\end{proof}
%
%
%


%
%
%

\section{Joining the Models}
\label{sec:together}

We have now described the two fundamental random processes of our
work. Though the two are interesting in isolation, it is by combining
them that we will achieve our overall goal: to characterize the
probability of the canonical data race manifesting, under various
memory models.

Our first observation is to note that Corollary \ref{cor:shift} can be
further simplified, provided the segment lengths are drawn from a
distribution with a very weak condition.

\begin{thm} \label{thm:identical} Let $\lvarvec = \lvar_1, \ldots,
  \lvar_n$ be a distribution over segment lengths, drawn from
  $\mathbb{N}^n$. Assume that the marginal distribution of each
  segment length is identical (i.e., $\lvar_i \sim \lvar_j \ \forall \
  i \neq j$); they needn't be independent. Then all permutations of
  segment shifts are equivalent, and
\begin{equation*}
\Pr[A(\lvarvec)] = c(n) \cdot 2^{-\binom{n+1}{2}} \cdot n!
	\cdot \E_{\lvarvec} [\prod_{i=1}^{n-1} 2^{-i \lvar_{i}}].
\end{equation*}
\end{thm}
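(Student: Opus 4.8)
The plan is to derive Theorem~\ref{thm:identical} from Corollary~\ref{cor:shift} by integrating out the (now random) segment lengths and then collapsing the permutation sum by symmetry. First I would apply the law of total expectation, conditioning on the realized lengths $\lvarvec = \lvalvec$. Since the event $A$ depends on the i.i.d.\ shifts $s_i$ and on the lengths only through the geometry of the intervals, conditioning on $\lvarvec = \lvalvec$ reduces the inner probability (over the shifts alone) to exactly the closed form of Corollary~\ref{cor:shift}. Hence
\begin{equation*}
\Pr[A(\lvarvec)] = \E_{\lvarvec}\!\left[\, c(n)\cdot 2^{-\binom{n+1}{2}} \sum_{\sigma \in Sym_n} \prod_{i=1}^{n-1} 2^{-(n-i)\lvar_{\sigma(i)}} \,\right].
\end{equation*}
The prefactor $c(n)\cdot 2^{-\binom{n+1}{2}}$ is deterministic, so I would pull it out and use linearity of expectation to interchange the finite sum over $Sym_n$ with $\E_{\lvarvec}$, obtaining $c(n)\,2^{-\binom{n+1}{2}} \sum_{\sigma} \E_{\lvarvec}[\prod_{i=1}^{n-1} 2^{-(n-i)\lvar_{\sigma(i)}}]$.

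The heart of the argument is to show that the summand $\E_{\lvarvec}[\prod_{i=1}^{n-1} 2^{-(n-i)\lvar_{\sigma(i)}}]$ does not depend on $\sigma$. I would establish this from permutation-invariance of the joint law of $(\lvar_1,\ldots,\lvar_n)$: relabeling the segments by an arbitrary $\tau \in Sym_n$ carries the $\sigma$-term to the $(\tau\sigma)$-term while leaving the expectation unchanged, and as $\tau$ ranges over $Sym_n$ so does $\tau\sigma$; hence all $n!$ terms share a common value $V$ and the sum equals $n!\cdot V$. To read off the stated form, I would evaluate $V$ at the representative $\sigma^*$ defined by $\sigma^*(j)=n-j$ for $j<n$ and $\sigma^*(n)=n$, which assigns coefficient $i$ to segment $i$ and so yields $V=\E_{\lvarvec}[\prod_{i=1}^{n-1}2^{-i\lvar_i}]$. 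Substituting gives the theorem.

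The step I expect to be the crux---and would scrutinize most carefully---is exactly this $\sigma$-invariance. Because the weights $(n-1,n-2,\ldots,1)$ are distinct, each summand genuinely probes the \emph{joint} distribution of the lengths rather than their one-dimensional marginals, so identical marginals on their own do not equate, for example, $\E[2^{-2\lvar_1-\lvar_2}]$ with $\E[2^{-2\lvar_2-\lvar_1}]$. What the collapse really uses is that the joint law of $(\lvar_1,\ldots,\lvar_n)$ is \emph{exchangeable}. This holds automatically in the setting of interest, where the threads are independently reordered identical copies of a single randomly generated program (Section~\ref{sec:generation}), so the $\lvar_i$ are i.i.d.; under exchangeability every relabeling preserves all joint moments and the argument above goes through verbatim.
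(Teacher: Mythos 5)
Your proposal is correct and follows essentially the same route as the paper's proof: condition on the realized lengths to invoke Corollary~\ref{cor:shift}, pull out the constants and exchange the expectation with the finite sum over $Sym_n$, and collapse the $n!$ terms to a single representative by permutation-invariance of the joint law of the lengths. Your scrutiny of the crux is also well-placed: the paper's own argument likewise needs exchangeability rather than merely identical marginals (it asserts $\Pr[B_{\Gamma}] = \Pr[B_{\sigma(\Gamma)}]$ via a product factorization that even presumes independence), and exchangeability does hold in the intended application because the critical-window lengths are i.i.d.\ \emph{conditional on} the shared random initial program---hence exchangeable, though, contrary to your final parenthetical, not unconditionally i.i.d.
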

The proof is given in
\ifthenelse{\boolean{fullversion}}
{Appendix~\ref{app:finalthms}.}
{the full paper.}
Because the identicality condition holds for the critical window size,
the theorem gives an indication of how it is that we can analyze the
overall bug manifestation concretely.  Recall that the process of
Section~\ref{sec:critwindow} generates a uniformly random program of
$\st$s and $\ld$s, then randomly ``settles'' each instruction in turn,
according to the rules of the memory model. The process of
Section~\ref{sec:shift} applies a random ``shift'' to a series of line
segments, the key event for which is the mutual disjointness of all
the segments. We now combine these two processes by letting the line
segment lengths of the shift process be distributed as the critical
window size of the settling process.  An important subtlety is that
\emph{we generate a single initial random program, then independently
  reorder n copies of this program}. Though this makes the analysis
more complex, it adds a degree of realism: with $n$ identical threads,
it is more natural that the same data race would be present in the
same position of every pair of threads.  The following two theorems
summarize our key results.
%
\begin{thm} \label{thm:2threads}
  For $n=2$ threads, the probability that the canonical data race does
  not manifest is the following, in each of the three main models.
  \begin{center}
    \begin{tabular}{lc}
      \emph{\textbf{Sequential Consistency:}} & $\Pr[A] \approx 0.1666$ \\
      \emph{\textbf{Total Store Order:}}\footnotemark & $0.1369 > \Pr[A] > 0.1315$ \\
      \emph{\textbf{Weak Ordering:}} & $\Pr[A] \approx 0.1296$ \\
    \end{tabular}
  \end{center}
\end{thm}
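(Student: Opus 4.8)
The plan is to push the single-thread window distributions of Theorem~\ref{thm:crit} through the disjointness formula of Theorem~\ref{thm:identical}, specialized to $n=2$. First I would collapse the master formula. For $n=2$ we have $\binom{n+1}{2}=3$ and $n!=2$, and the product $\prod_{i=1}^{n-1}2^{-i\lvar_i}$ reduces to the single factor $2^{-\lvar_1}$; inserting the exact constant $c(2)=\frac{8}{3}$ from Corollary~\ref{cor:shift}, Theorem~\ref{thm:identical} yields
\[
\Pr[A] \;=\; \frac{8}{3}\cdot 2^{-3}\cdot 2!\cdot \E\!\left[2^{-\lvar_1}\right] \;=\; \frac{2}{3}\,\E\!\left[2^{-\lvar_1}\right].
\]
Two observations justify this reduction: the threads are independently reordered copies of a single randomly generated program, so their critical-window lengths have a common marginal distribution, which is all the hypothesis of Theorem~\ref{thm:identical} requires (identical marginals, not independence); and for $n=2$ only one factor of the product survives, so the possible correlation between the two windows is irrelevant and only that marginal---exactly the quantity computed in Theorem~\ref{thm:crit}---enters.

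Next I would convert window \emph{growth} $\gamma$ into segment \emph{length} $\lvar_1$. A window with $\gamma$ inserted instructions occupies the critical $\ld$, those $\gamma$ instructions, and the critical $\st$, i.e.\ a run of $\gamma+2$ unit-length time steps, so $\lvar_1=\gamma+2$ and $\E[2^{-\lvar_1}]=\frac{1}{4}\E[2^{-\gamma}]$, giving $\Pr[A]=\frac{1}{6}\,\E[2^{-\gamma}]$. It then remains only to evaluate $\E[2^{-\gamma}]=\sum_{\gamma\ge 0}2^{-\gamma}\Pr[B_\gamma]$ in each model using Theorem~\ref{thm:crit}. Under Sequential Consistency $\gamma\equiv 0$, so $\E[2^{-\gamma}]=1$ and $\Pr[A]=\frac{1}{6}\approx 0.1666$. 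Under Weak Ordering I would separate the $\gamma=0$ atom of mass $\frac{2}{3}$ from the tail and sum $\frac{1}{3}\sum_{\gamma\ge 1}4^{-\gamma}=\frac{1}{9}$, obtaining $\E[2^{-\gamma}]=\frac{7}{9}$ and $\Pr[A]=\frac{7}{54}\approx 0.1296$. Both are elementary closed-form geometric sums.

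The crux is Total Store Order, where Theorem~\ref{thm:crit} provides only the two-sided estimate $\Pr[B_\gamma]=\frac{6}{7}4^{-\gamma}+R(\gamma)2^{-\gamma}$ with $0\le R(\gamma)\le \frac{2}{21}$ for $\gamma>0$. I would split $\E[2^{-\gamma}]$ into an exactly summable principal part and a residual. The principal part contributes $\frac{2}{3}+\frac{6}{7}\sum_{\gamma\ge 1}8^{-\gamma}=\frac{2}{3}+\frac{6}{49}=\frac{116}{147}$, while the residual $\sum_{\gamma\ge 1}R(\gamma)4^{-\gamma}$ is nonnegative and at most $\frac{2}{21}\sum_{\gamma\ge 1}4^{-\gamma}=\frac{2}{63}$. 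Hence $\E[2^{-\gamma}]\in\bigl[\frac{116}{147},\,\frac{362}{441}\bigr]$, and multiplying by $\frac{1}{6}$ gives $\frac{58}{441}\le \Pr[A]\le\frac{181}{1323}$, that is $0.1315<\Pr[A]<0.1369$, as claimed.

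I expect the real work to be bookkeeping rather than ideas: in the Total Store Order case one must isolate the $R(\gamma)$ term cleanly, keep the $\gamma=0$ atom separate from the geometric tail, and carry the one-sided bounds on $R$ through the convergent series without dropping signs. The one genuinely easy-to-miss modeling point is the $\gamma\mapsto\gamma+2$ offset between window growth and segment length, since it is precisely this offset that fixes the absolute constant $\frac{1}{6}$ and thereby calibrates all three numerical answers.
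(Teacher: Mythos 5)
Your proposal is correct and follows essentially the same route as the paper's own proof: reduce to $\Pr[A]=\tfrac{2}{3}\,\E[2^{-\Gamma_1}]$ via the $n=2$ shift formula (you invoke Theorem~\ref{thm:identical} where the paper re-derives this specialization directly from Corollary~\ref{cor:shift}, but these are the same computation), apply the growth-to-length offset $\Gamma_1=\gamma+2$, and sum the geometric series from Theorem~\ref{thm:crit} in each model, carrying the one-sided bounds on $R(\gamma)$ for TSO. Your numerical bounds $58/441$ and $181/1323$ agree exactly with the paper's (the paper writes the upper bound as $58/441+1/189=181/1323$), so there is nothing to correct.
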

\begin{thm} \label{thm:nthreads}
As $n$ grows, the probability of successful execution is identical in
\emph{all} models, up to lower order terms in the exponent. In particular,
$\Pr[A] = e^{-n^2(1+o(1))}$.
\end{thm}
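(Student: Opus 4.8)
The plan is to substitute the critical-window distribution of Theorem~\ref{thm:crit} into the closed-form disjointness probability of Theorem~\ref{thm:identical}, and then extract the leading-order exponent, showing that its quadratic term is the same in all three models. First I would observe that the $n$ threads are exchangeable: they are independent reorderings of one common initial program, so the window lengths $\Gamma_1,\ldots,\Gamma_n$ have identical marginal distributions---exactly the $\Pr[B_\gamma]$ of Theorem~\ref{thm:crit}---even though they are not mutually independent. This is precisely the hypothesis of Theorem~\ref{thm:identical}, which therefore applies and yields
\begin{equation*}
\Pr[A] = c(n)\cdot 2^{-\binom{n+1}{2}}\cdot n!\cdot \E\Big[\textstyle\prod_{i=1}^{n-1} 2^{-i\Gamma_i}\Big], \qquad c(n)\in[2,4].
\end{equation*}

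Taking logarithms, I would split $\ln\Pr[A]$ into four pieces and track their orders: $\ln c(n)=O(1)$; the dominant factor $-\binom{n+1}{2}\ln 2 = -(1+o(1))\tfrac{\ln 2}{2}\,n^2$, which is manifestly model-independent; $\ln n! = n\ln n - n + O(\log n)=O(n\log n)$; and the model-dependent expectation factor. The entire argument then reduces to showing that this last factor shifts the exponent by only $o(n^2)$, so that it is swamped by the common quadratic term and all three models agree to leading order.

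To bound the expectation factor I would sandwich it between $e^{-\Theta(n)}$ and $1$. The upper bound is immediate: each $\Gamma_i\ge 0$ forces $\prod_{i}2^{-i\Gamma_i}\le 1$. For the lower bound---and here the robustness noted in Section~\ref{sec:critwindow} is exactly what is exploited---the only feature of the window distribution that matters is that $\Pr[\Gamma=0]$ is a positive constant ($1$ for SC, $2/3$ for WO and TSO, by Theorem~\ref{thm:crit}), not the detailed shape of $\Pr[B_\gamma]$. Using $\prod_i 2^{-i\Gamma_i}\ge \mathbf{1}[\Gamma_1=\cdots=\Gamma_n=0]$ gives $\E[\prod_i 2^{-i\Gamma_i}]\ge\Pr[\text{all }\Gamma_i=0]$; conditioning on the shared initial program $S_0$ (given which the $\Gamma_i$ are i.i.d.) and writing $p(S_0)=\Pr[\Gamma=0\mid S_0]$, this equals $\E_{S_0}[p(S_0)^n]\ge \Pr[p(S_0)\ge c_0]\cdot c_0^{\,n}$. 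Since $\E_{S_0}[p(S_0)]=\Pr[\Gamma=0]$ is a positive constant while $p\le 1$, the event $\{p(S_0)\ge c_0\}$ has positive probability for some small constant $c_0>0$, and hence $\E[\prod_i 2^{-i\Gamma_i}]\ge e^{-\Theta(n)}$.

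Combining the pieces, in every model $\ln\Pr[A] = -\binom{n+1}{2}\ln 2 + O(n\log n)$, where the $n!$ and $c(n)$ factors account for $n\log n - n + O(1)$ independently of the model and the expectation factor contributes a further model-dependent amount confined to between $-\Theta(n)$ and $0$. Since every correction is $o(n^2)$, the quadratic leading term $-\binom{n+1}{2}\ln 2 = -(1+o(1))\tfrac{\ln 2}{2}n^2$ is the same for Sequential Consistency, Total Store Order, and Weak Ordering, which is exactly the assertion that $\Pr[A]$ agrees across models up to lower-order terms in the exponent, with the claimed quadratic-exponential decay. I expect the main obstacle to be the expectation factor: it is a product of $n-1$ terms, and one must rule out that it contributes at order $n^2$ (which would both alter the leading constant and risk distinguishing the models). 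The resolution is the two-sided estimate above, whose lower half rests on the exponential weighting $2^{-i\Gamma_i}$ concentrating the expectation on the zero-window event, combined with the uniform positive lower bound on $\Pr[\Gamma=0]$ supplied by Theorem~\ref{thm:crit}.
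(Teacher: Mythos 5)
Your overall strategy---substitute the window distribution into Theorem~\ref{thm:identical}, then sandwich the model-dependent expectation factor so that the model-independent factors control the exponent---is exactly the paper's strategy. But there is a genuine error in the bookkeeping: you conflate the critical-window \emph{growth} $\gamma$ (the number of instructions inserted between the critical $\ld$ and $\st$, which is $0$ with constant probability) with the \emph{segment length} fed into the shift process. The window runs from the critical $\ld$ to the critical $\st$ inclusive, so the segment lengths satisfy $\Gamma_i = \gamma_i + 2 \geq 2$; under Sequential Consistency $\Gamma_i = 2$ with probability $1$, not $\Gamma_i = 0$. (The paper is explicit about this: in the $n=2$ proof, $\E[2^{-\Gamma_1}] = \sum_{k \geq 2} 2^{-k}\Pr[B_{k-2}]$, and in the $n$-thread proof, ``under sequential consistency, $\gamma_{\sigma(i)}=2$ always.'') Consequently your sandwich $e^{-\Theta(n)} \leq \E[\prod_{i=1}^{n-1} 2^{-i\Gamma_i}] \leq 1$ is false: the product is pointwise at most $2^{-2\binom{n}{2}}$, so the expectation factor is $e^{-\Theta(n^2)}$ and in fact contributes the \emph{larger} share of the quadratic exponent, not an $O(n)$ correction. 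The correct leading term is $-\bigl(\binom{n+1}{2} + 2\binom{n}{2}\bigr)\ln 2 = -\tfrac{3\ln 2}{2}\,n^2(1+o(1))$, i.e., the paper's $\Pr[A]=2^{-n^2(3/2+o(1))}$, whereas your accounting yields $-\tfrac{\ln 2}{2}\,n^2$. A sanity check against the SC case (or against Theorem~\ref{thm:2threads}) would have exposed the mismatch.

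The error is repairable, and once repaired your proof essentially coincides with the paper's: write $\Gamma_i = 2 + g_i$ with growth $g_i \geq 0$, pull out the model-independent factor $2^{-2\binom{n}{2}}$, and apply your two-sided bound to $\E[\prod_i 2^{-i g_i}]$. Your upper bound ($\leq 1$) is then valid; for the lower bound, the paper uses the pointwise claim that in \emph{every} model and for \emph{every} initial program, $\Pr[g=0 \mid S_0] \geq 1/2$ (the window cannot grow unless the critical $\ld$'s first swap succeeds), together with conditional independence of the $n$ reorderings given the shared $S_0$, giving $\E[\prod_i 2^{-i g_i}] \geq 2^{-(n-1)}$. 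Your Markov-type argument on $p(S_0) = \Pr[g = 0 \mid S_0]$ accomplishes the same thing and needs only the marginal $\Pr[g=0]$ to be a positive constant, so that piece of your proposal is sound (and arguably slightly more robust than the paper's pointwise claim); it is the omission of the mandatory baseline length $2$ per window that breaks the argument as written.
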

\footnotetext{A very similar analysis achieves a similar result for
  Partial Store Order (PSO).  We omit the result for brevity.}  The
first tightly bounds the probability of successful execution for the
case of $n=2$ threads; the second gives an asymptotic bound on this
probability for large $n$. We leave the proofs of these theorems to
\ifthenelse{\boolean{fullversion}}
{Appendix~\ref{app:finalthms}.}
{the full paper.}
Both proofs are rather technical and build upon the theorems of the previous two sections. The only surprising
observation necessary is that, when lower bounding
a certain expectation over the critical window for $n$ threads, it suffices
to use only a single term of this expectation. Doing so
achieves the asymptotic behavior we seek.

\noindent \textbf{Key Observations: }Interpreting
Theorems~\ref{thm:2threads} and~\ref{thm:nthreads} yields remarkable
insights. Though the case of $n=2$ substantively distinguishes the
memory models, we find that as $n$ grows, the probability in all
memory models approaches the same value, up to lower order terms in
the exponent.  This dichotomy is a fundamental take-away for informing
computer architecture decisions. Though the use of weaker memory
models does increase the risk of program error, as the number of
threads grows this risk grows negligibly compared to growth of risk of
error in even sequential consistency. This is of particular importance
given the trends towards ever larger multicores that enable more and
more concurrent threads.

\section{Discussion}
\label{sec:discussion}




\noindent{\bf Limitations and possible extensions:} Our analysis
assumes that the program consists solely of loads and stores, when
real programs include synchronization, arithmetic, etc. These
instructions can affect the timing of the program, introduce data
dependencies that limit reordering, or disallow certain types of
reorderings. An important item for future work is to include acquire/release
fences, which are necessary to simulate memory models such as Release
Consistency~\cite{gharachorloo90}. These fences act as one-way
barriers, allowing instructions to reorder into, but not out of, a
critical section. This behavior can be easily modeled using settling
(\S\ref{sec:reordering_model}). Fences make concurrency bugs less
likely to manifest, as programs with fences have fewer legal
reorderings. However, we conjecture that adding fences will not significantly
change the main conclusions derived in this paper.

\noindent{\bf Optimized implementations of SC: }Our model
of Sequential Consistency assumes a relatively simple implementation
wherein each processor executes only one memory instruction at a
time. Many SC implementations use aggressive optimizations such as
speculative execution to compete with the performance of weaker memory
models \cite{gharachorloo91a, gniady99, ceze07}.
We do not consider this simplifying assumption to be a weakness of our
model; rather, we believe our results about weak memory models can be
extended to address optimizing implementations of strong memory
models. In other words, concurrency bugs are more likely to manifest
in an implementation of SC that uses aggressive reordering than in a
simple (and slow) implementation.

 \noindent{\bf Generality of Results: }In this paper, we propose and study one specific probabilistic process to model program execution and thread interleaving. Clearly, there are other plausible models that can be studied. Our intuition is that the results in this paper have a certain robustness with regard to changes to the parameters in our models as well as to changes in the model. However, future work is required to formally validate this conjecture.

\section{Conclusion}

With the ubiquity of multicore systems and the trend towards integrating every more cores on a single chip, multiprocessor programmability has become one of the key challenges in computer science. Even with improvements in
programmability, we are likely to see an increase in software defects,
given the inherent difficulty of concurrent programming.
Memory consistency models are at the center of the programmability
discussion, since they determine the memory access semantics of
parallel programs. The debate over memory models has historically
revolved around the trade-offs between programmability, performance
and complexity. In this paper we bring a new axis to this discussion:
{\em software reliability}. We study an analytical model and show that concurrency bugs are indeed more
likely to manifest themselves in relaxed memory models, but surprisingly, that as the number of parallel threads increases, the difference between harsh and weak memory models diminishes. The latter observation can have important consequences on system designers when developing new memory models.




\bibliographystyle{abbrv}
\bibliography{bibliography}

\begin{thebibliography}{10}

\bibitem{adve96}
S.~V. Adve and K.~Gharachorloo.
\newblock Shared memory consistency models: A tutorial.
\newblock {\em IEEE Computer}, 29(12):66--76, December 1996.

\bibitem{adve90}
S.~V. Adve and M.~D. Hill.
\newblock Weak ordering---a new definition.
\newblock In {\em Proc. of the 17th International Symposium on Computer
  Architecture (ISCA)}, 1990.

\bibitem{X86AMD}
{AMD Corp.}
\newblock {\em {AMD64 Architecture Programmer's Manual - Volume 2: System
  Programming}}, July 2007.

\bibitem{artho03}
C.~Artho, K.~Havelund, and A.~Biere.
\newblock High-level data races.
\newblock {\em Journal on Software Testing, Verification \& Reliability},
  13(4):220--227, 2003.

\bibitem{arvind06}
Arvind and J.-W. Maessen.
\newblock Memory model = instruction reordering + store atomicity.
\newblock In {\em Proc. of the 33th International Symposium on Computer
  Architecture (ISCA)}, 2006.

\bibitem{cppmm}
H.-J. Boehm and S.~V. Adve.
\newblock Foundations of the {C++} concurrency memory model.
\newblock In {\em Proc. of the 29th Conference on Programming Language Design
  and Implementation (PLDI)}, 2008.

\bibitem{ceze07}
L.~Ceze, J.~Tuck, P.~Montesinos, and J.~Torrellas.
\newblock Bulk{SC}: Bulk enforcement of sequential consistency.
\newblock In {\em Proc. of the 34th International Symposium on Computer
  Architecture (ISCA)}, 2007.

\bibitem{dubois86}
M.~Dubois, C.~Scheurich, and F.~A. Briggs.
\newblock Memory access buffering in multiprocessors.
\newblock In {\em Proc. of the 13th International Symposium on Computer
  Architecture (ISCA)}, 1986.

\bibitem{qadeer03}
C.~Flanagan and S.~Qadeer.
\newblock A type and effect system for atomicity.
\newblock In {\em Proc. of the 24th Conference on Programming Language Design
  and Implementation (PLDI)}, 2003.

\bibitem{gharachorloo91a}
K.~Gharachorloo, A.~Gupta, and J.~Hennessy.
\newblock Two techniques to enhance the performance of memory consistency
  models.
\newblock In {\em International Conference on Parallel Processing}, 1991.

\bibitem{gharachorloo90}
K.~Gharachorloo, D.~Lenoski, J.~Laudon, P.~Gibbons, A.~Gupta, and J.~Hennessy.
\newblock Memory consistency and event ordering in scalable shared-memory
  multiprocessors.
\newblock In {\em Proc. of the 17th International Symposium on Computer
  Architecture (ISCA)}, 1990.

\bibitem{gniady99}
C.~Gniady, B.~Falsafi, and T.~N. Vijaykumar.
\newblock Is {SC} + {ILP} = {RC}?
\newblock In {\em Proc. of the 26th International Symposium on Computer
  Architecture (ISCA)}, 1999.

\bibitem{goodman89}
J.~R. Goodman.
\newblock Cache consistency and sequential consistency.
\newblock Technical Report 1006, University of Wisconsin-Madison, 1989.

\bibitem{X86intel}
{Intel Corp.}
\newblock {\em {Intel 64 and IA-32 Architectures Software Developer's
  Manual---Volume 3A: System Programming Guide, Part 1}}, December 2009.

\bibitem{lamport79}
L.~Lamport.
\newblock How to make a multiprocessor computer that correctly executes
  multiprocess programs.
\newblock {\em IEEE Transactions on Computers}, 28(9):690--691, September 1979.

\bibitem{pensieve}
J.~Lee and D.~Padua.
\newblock Hiding relaxed memory consistency with compilers.
\newblock In {\em Proc. of the 9th Conference on Parallel Architectures and
  Compilation Techniques (PACT)}, 2000.

\bibitem{lu08}
S.~Lu, S.~Park, E.~Seo, and Y.~Zhou.
\newblock Learning from mistakes---a comprehensive study on real world
  concurrency bug characteristics.
\newblock In {\em Proc. of the 13th Conference on Architectural Support for
  Programming Languages and Operating Systems (ASPLOS)}, 2008.

\bibitem{manson05}
J.~Manson, W.~Pugh, and S.~V. Adve.
\newblock The {J}ava memory model.
\newblock In {\em Proc. of the 32th Symposium on Principles of Programming
  Languages (POPL)}, 2005.

\bibitem{sparcmanual}
{SPARC International, Inc.}
\newblock {\em {The SPARC Architecture Manual---Version 8}}, 1992.

\end{thebibliography}

\ifthenelse{\boolean{fullversion}}
{\newpage

\appendix

\section{Model Definition}

\subsection{Initial Program Order}
\label{app:initialorder}
The initial program order $S_0$ consists of $n+2$ instructions:
\begin{equation*}
x_1, x_2, \ldots, x_n, \ld~X, \st~X
\end{equation*}
where for $1 \le i \le n$, $x_i$ has type $\tau(x_i) = \st$ with
probability $p$, and type $\tau(x_i) = \ld$ with probability $(1 -
p)$. Each $x_i$ accesses a location $X_i$ such that $X_i = X_j$ only
if $i = j$, and $X_i \ne X$. For the purposes of defining the model we
assume that $n$ is finite, but in the analysis it is useful to
approximate a very long program by letting $n \to \infty$.

\subsection{Definition of Settling Process}
\label{app:modeldef}

We model instruction reordering as a random process consisting of
$n+2$ rounds. This process produces a permutation of $S_0$ which we
call $S_{n+2}$; round $i$ produces the intermediate permutation
$S_i$. During round $i$, instruction $x_{i}$ is inserted into the
permuted ordering of instructions $x_1$ through $x_{i-1}$. We decide
where to insert instruction $x_{i}$ by repeatedly swapping $x_{i}$
with the instruction directly before it.  Each swap succeeds with
probability $\rho_{\tau_1,\tau_2}$, where $\tau_1$ is the type of the
instruction directly before $x_{i}$'s current location and $\tau_2$ is
the instruction type of $x_{i}$, and fails with probability $1 -
\rho_{\tau_1,\tau_2}$. $\rho_{\tau_1,\tau_2}$ is always either 0 or
$s$, depending on the memory model. The single exception is for the
critical $\ld$ and $\st$, $x_{plen+1}$ and $x_{\plen+2}$. If $x_{\plen+2}$
ever tries to swap with $x_{\plen+1}$, it automatically fails, because
they access the same memory address. The round completes when a swap
fails occur or $x_i$ reaches position 1. This recursive random process
is called \emph{settling}.

Let $\permi{i}{j}$ be a function from positions in $S_0$ to positions
in $S_i$. (Note that $\permi{0}{j} = j$ for all $j$.)
We formally define the insertion point of instruction $x_i$ using the
probability distribution $\beta_i$.

\begin{defn}
  \label{def:beta}
  Given the intermediate permutation $S_{i-1}$, we define a
  probability distribution $\beta_{i,k}$ as follows:
  \begin{itemize}
  \item If $k = 1$, $1$.
  \item Else let $j = \permiinv{i-1}{k-1}$ and let $q = \rho_{\tau(x_j),\tau(x_i)}$.
    \begin{itemize}
    \item $k$ with probability $1-q$
    \item Draw from $\beta_{i,k-1}$ with probability $q$.
    \end{itemize}
  \end{itemize}
  We also define $\beta_i$ to be $\beta_{i,i}$.
\end{defn}

$\beta_i$ describes the distribution of possible positions for
instruction $i$ after round $i$ of settling. $\beta_{i,k}$ describes
the distribution of the possible positions for instruction $i$ given
that $i$ moves up at least as high as position $k$.

The result of round $i$ is the permutation $\pi_i$, in which the
instructions following $x_{i}$'s new location are each pushed down by
one, and the instructions preceding $x_{i}$'s new location do not move.

\begin{defn}
  \label{def:pi}
  Recall that $\pi_i$ is a function mapping positions in $S_0$ to
  $S_i$.  Given permutation $S_{i-1}$, we draw $k$ from $\beta_i$ and
  construct the permutation $S_i$ as follows:
  \begin{align*}
    \permi{i}{i} &= k \\
    \permi{i}{j} &= \permi{i-1}{j} \textrm{~~for~~} \permi{i-1}{j} < k \\
    \permi{i}{j} &= \permi{i-1}{j} + 1 \textrm{~~for~~} \permi{i-1}{j} \ge k
  \end{align*}
\end{defn}

We use definitions \ref{def:beta} and \ref{def:pi} to get a
probability distribution over permutations of $S_0$.
We refer to the final permutation $\pi_{\plen+2}$ as $\pi$.


\subsection{Definition of Interleaving Model}
\label{sec:interleavingdef}

Formally, the thread interleaving model is defined as follows.
Let threads $T_1, \ldots, T_n$ be $n$ identical threads,
distributed as described in \ref{app:initialorder}.

We allow each initially-identical thread to reorder independently,
using the settling process of \ref{app:modeldef}.
We refer to the final permutation $\pi_{\plen+2}$ of thread $T_k$ as $\pi^{(k)}$.
Define the ``critical window'' $W_k$ for a reordered thread $T_k$ to be the
set of indices (inclusively) between the settled positions of
the critical instructions. E.g.,
\begin{multline*}
W_k = \{\pi^{(k)}(\plen+1), \pi^{(k)}(\plen+1) + 1, \ldots,\\
\pi^{(k)}(\plen+2) - 1, \pi^{(k)}(\plen+2) \}.
\end{multline*}
Finally, we independently allow each thread to ``shift up'' with
respect to one another.

For each $k$, we allow thread $T_k$ to shift exactly $i$ positions
up with probability $2^{-(i+1)}$. Observe that $\sum_{i=0}^{\infty}
2^{-(i+1)} = 1$, so as $n \to \infty$, this gives a
probability distribution over the positions of each instruction
in each thread. Let $\eta_k$ be the shift of $T_k$.

We then say that the bug \emph{manifests} if there
exist $k \neq \ell$ such that the critical windows of reordered $T_k$
and $T_{\ell}$ overlap whatsoever. In other words, define the
bug non-manifestation event $A$ by
\begin{multline*}
A = \neg \exists k \neq \ell : (\pi^{(\ell)}(\plen+2)-\eta_{\ell} \geq \pi^{(k)}(\plen+2)-\eta_k) \\
  \wedge (\pi^{(\ell)}(\plen+2)-\eta_{\ell} \leq \pi^{(k)}(\plen+1)-\eta_k).
\end{multline*}
(Note that for any overlapping pair of ranges, the bottom of one
of the two windows will necessarily overlap with the other window.)

Expressed alternately, let $W_k'$ be the shifted window \\
$\{\pi^{(k)}(\plen+1) - \eta_k, \pi^{(k)}(\plen+1) + 1 - \eta_k, \ldots,
\pi^{(k)}(\plen+2) - 1 - \eta_k, \pi^{(k)}(\plen+2) - \eta_k \}$.
Then
\[
A = \neg \exists k \neq \ell : W_k' \cap W_{\ell}' \neq \emptyset.
\]

\begin{figure*}[htb]
  \begin{center}
    \begin{center}
      \includegraphics[scale=0.5]{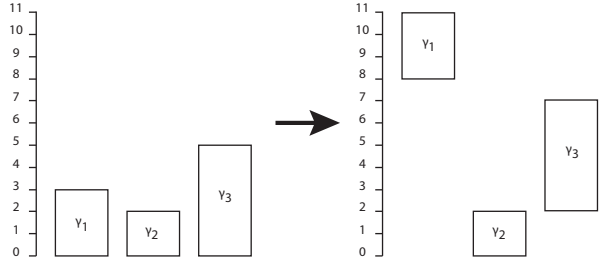}
    \end{center}
    \caption{An instantiation of the shift process. Three segments of
      lengths $\bar{\gamma} = (\gamma_1, \gamma_2, \gamma_3) =
      (3,2,5)$ are independently shifted. This particular shift occurs
      with probability $2^{-8-1} \cdot 2^{-0-1} \cdot 2^{-2-1} =
      2^{-13}$. The disjointness event $A(\bar{\gamma})$ does indeed
      hold here. \label{fig:shift}}
  \end{center}
\end{figure*}

\section{Proofs}

\subsection{Proofs for Theorem \ref{thm:crit}: Critical Window Growth}
\label{app:critproofs}


In this section, we finish up the proofs for
Lemma \ref{lm:boundonlmu}, and the Total Store Order case of
Theorem \ref{thm:crit}.



\begin{proof}[(Remainder of Lemma \ref{lm:boundonlmu})]
  \begin{align*}
    \Pr[L_{\mu}] &= \Pr[L_{\mu}|\Psi_{\mu} = q] \cdot \Pr[\Psi_{\mu} = q] \\
    &= \sum_{q=0}^{\infty} 2^{-\mu} 2^{-q} \binom{\mu + q - 1}{q} \Pr[F_{\mu} | \Psi_{\mu} = q] \\
    & \quad\quad\quad \cdot \left( 1 - (1-g(\mu,q)) \cdot \Pr[S_{\st,\Phi_{\mu}-1}(\Phi_{\mu}-1)] \right) \\
    &= \sum_{q=0}^{\infty}2^{-\mu} 2^{-q} \binom{\mu + q - 1}{q}
    \Pr[F_{\mu} | \Psi_{\mu} = q] \\ 
    & \quad\quad\quad \cdot \left( 1 - 2^{-q} \cdot \frac{2}{3} \right)
  \end{align*}
  By Claim~\ref{claim:1}:
  \begin{align*}
    &\geq 2^{-\mu} \sum_{q=0}^{\infty} 2^{-q} \binom{\mu + q - 1}{q}
    \cdot \frac{2^{-(q-1)} - 2^{-\mu q}}{\binom{\mu + q - 1}{q}} \left(1 - 2^{-q} \frac{2}{3}\right) \\
    &= 2^{-\mu} \sum_{q=0}^{\infty} \bigg(2^{-(2q-1)} - \frac{2}{3} 2^{-(3q-1)}
    - 2^{-\mu q - q} + \frac{2}{3} 2^{-\mu q - 2 q}\bigg) \\
    &=  2^{-\mu} \left(\frac{2}{1 - \frac{1}{4}} - \frac{4/3}{1 - \frac{1}{8}}
    - \frac{1}{1 - \frac{1}{2^{\mu+1}}} + \frac{2/3}{1 - \frac{1}{2^{\mu+2}}} \right) \\
    &=  2^{-\mu}\left(\frac{8}{7}
    - \frac{1}{1 - \frac{1}{2^{\mu+1}}} + \frac{2/3}{1 - \frac{1}{2^{\mu+2}}} \right)
  \end{align*}

  The above expression is difficult to work with, so we give a simpler
  lower bound that holds for all $\mu \geq 1$. Let $h(\mu)$ be the
  parenthesized expression above (such that the bound is $\Pr[L_{\mu}]
  \geq 2^{-\mu} \cdot h(\mu)$).

  We differentiate $h(\mu)$ and show that it is increasing, and
  compute a small value explicitly, so that we may lower bound all
  higher values by this value.

  We will show that for all  $\mu \geq 1$,
  \[
  h(\mu) \geq 4/7
  \]

  We first note that	
  \begin{align*}
    h(1) &= 8/7 - \frac{1}{1 - 1/4} - \frac{2}{3} \cdot \frac{1}{1 - 1/8} \\
    &= 8/7 - 4/3 + 16/21 \\
    &= 4/7.
  \end{align*}

  We now show that $h(\mu)$ is increasing in $\mu$.  The function
  $h(\mu)$ is defined as $h(\mu) = \frac{8}{7} - (1 - 2^{-(\mu +
    1)})^{-1} + \frac{2}{3} \cdot (1 - 2^{-(\mu+2)})^{-2}$.  To see
  that $h(\mu)$ is increasing, we differentiate w.r.t. $\mu$.
  \begin{equation*}
    \frac{d}{d \mu} h(\mu) = \frac{2^{-(\mu+1)} (\ln 2)}{(1-2^{-(\mu+1)})^2}
    - \frac{2}{3} \cdot \frac{2^{-(\mu+2)} (\ln 2)}{(1-2^{-(\mu+2)})^2}
  \end{equation*}
  This expression is positive when
  \begin{equation*}
    \frac{1}{(1 - 2^{-(\mu+1)})^2} > \frac{4}{3 (1-2^{-(\mu+2)})^2},
  \end{equation*}
  e.g. $\frac{1 - 2^{-(\mu+2)}}{1 - 2^{-(\mu+1)}} >
  \sqrt{\frac{4}{3}}$.  This holds for all $\mu \geq 0$. Hence
  $h(\mu)$ is descreasing for non-negative $\mu$.
\end{proof}


\begin{proof}[(Remainder of Theorem~\ref{thm:crit})]
  ~\\
  It will be useful to calculate the total \emph{slack} in our
  probability bounds. Let $\Pr_{\ell}[L_{\mu}]$ be the lower bound for
  $\Pr[L_{\mu}]$ computed in this section, and $\Pr_{r}[L_{\mu}] =
  \Pr[L_{\mu}] - \Pr_{\ell}[L_{\mu}]$ be the remainder. The
  ``missing'' probability $R = \sum_{\mu=0}^{\infty} \Pr_r[L_{\mu}]$
  is computed below.
  \begin{claim} \label{claim:rbound}
    \begin{equation*}
      R = 2/21.
    \end{equation*}
  \end{claim}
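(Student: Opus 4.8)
The plan is to exploit the fact that the lower bounds $\Pr_\ell[L_\mu]$ were built term-by-term against a family of events that \emph{partition} the sample space, so the total slack $R$ collapses to one minus the sum of all the lower bounds. First I would observe that the events $\{L_\mu\}_{\mu \geq 0}$ --- ``exactly $\mu$ contiguous $\st$s sit immediately above the critical $\ld$ in $\prgm_\plen$'' --- are mutually exclusive and exhaustive, since there is always a well-defined (possibly zero) count of such $\st$s. Hence $\sum_{\mu=0}^{\infty} \Pr[L_\mu] = 1$, and therefore
\[
R = \sum_{\mu=0}^{\infty} \Pr_r[L_\mu] = \sum_{\mu=0}^{\infty}\big(\Pr[L_\mu] - \Pr_\ell[L_\mu]\big) = 1 - \sum_{\mu=0}^{\infty}\Pr_\ell[L_\mu].
\]

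Next I would evaluate $\sum_{\mu=0}^{\infty}\Pr_\ell[L_\mu]$ using the two pieces of Lemma~\ref{lm:boundonlmu}. The $\mu = 0$ term carries no slack, because $\Pr[L_0] = 1/3$ was established \emph{exactly}; thus $\Pr_\ell[L_0] = 1/3$ and $\Pr_r[L_0] = 0$. For $\mu \geq 1$ the bound is $\Pr_\ell[L_\mu] = \tfrac{4}{7}\,2^{-\mu}$, and summing this geometric series gives $\sum_{\mu=1}^{\infty}\tfrac{4}{7}\,2^{-\mu} = \tfrac{4}{7}$.

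Combining the two contributions yields $\sum_{\mu=0}^{\infty}\Pr_\ell[L_\mu] = \tfrac{1}{3} + \tfrac{4}{7} = \tfrac{19}{21}$, so that $R = 1 - \tfrac{19}{21} = \tfrac{2}{21}$, as claimed. I do not expect any genuine obstacle here: the only two points that deserve care are confirming that the $L_\mu$ exhaust all of the probability mass (so the infinite sum is exactly $1$, not merely bounded) and noting that the $\mu = 0$ case contributes nothing to $R$ because its probability was computed exactly rather than merely lower-bounded. Everything else reduces to an elementary geometric summation.
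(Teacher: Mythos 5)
Your proposal is correct and follows essentially the same route as the paper's proof: both use that the events $L_\mu$ partition the probability space (so $\sum_{\mu=0}^{\infty}\Pr[L_\mu]=1$), take $\Pr_\ell[L_0]=1/3$ exactly, sum the geometric lower bounds $\frac{4}{7}\,2^{-\mu}$ for $\mu \geq 1$ to get $\frac{4}{7}$, and conclude $R = 1 - \frac{1}{3} - \frac{4}{7} = \frac{2}{21}$. Your explicit remark that $\Pr_r[L_0]=0$ because the $\mu=0$ probability is exact is a nice clarification but does not change the argument.
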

  \begin{proof}
    Note that $\Pr_{r}[L_{\mu}]$ is nonnegative for all $\mu$, because
    $\Pr_{r}[L_{\mu}]$ is a lower bound. $L_{\mu}$ must hold for
    exactly one $\mu$, hence $\sum_{\mu=0}^{\infty} \Pr[L_{\mu}] = 1$.
    \begin{align*}
      R &= 1 - \sum_{\mu = 0}^{\infty} \Pr_{\ell}[L_{\mu}] \\
      &= 1 - \Pr_{\ell}[L_0]  - \sum_{\mu = 1}^{\infty} h(1) \cdot 2^{-\mu} \\
      &= 1 - 1/3 - 4/7 \cdot 1 \\
      &= 2/3 - 4/7 \\
      &= 2/21. \qed
    \end{align*}
  \end{proof}

  Now, using the derivation of $\Pr[L_0]$ from Lemma
  \ref{lm:boundonlmu},
  \begin{align*}
    \Pr[B_0] &= \Pr[B_0|L_0] \cdot \Pr[L_0] + \Pr[B_0|\neg L_0] \cdot \Pr[\neg L_0] \\
    &= 1 \cdot (1/3) + (1/2) \cdot (2/3) \\
    &= 2/3.
  \end{align*}

  Moving on to the case of $\gamma \geq 1$, we rewrite
  $\Pr[B_{\gamma}]$ as
  \begin{equation*}
    \Pr[B_{\gamma}] = \sum_{\mu = \gamma}^{\infty} \Pr[B_{\gamma} |
      L_{\mu}]  \cdot \Pr_{\ell}[L_{\mu}]
    + \sum_{\mu = \gamma}^{\infty} \Pr[B_{\gamma} | L_{\mu}]  \cdot \Pr_r[L_{\mu}].
  \end{equation*}
  We compute the first sum exactly, and provide upper and lower bounds
  for the second sum, in order to upper and lower bound
  $\Pr[B_{\gamma}]$.

  First we compute the value of $\sum_{\mu = \gamma}^{\infty}
  \Pr[B_{\gamma} | L_{\mu}] \cdot \Pr_{\ell}[L_{\mu}]$ exactly:
  \begin{align*}
    \sum_{\mu = \gamma}^{\infty} &\Pr[B_{\gamma} | L_{\mu}]  \cdot \Pr_{\ell}[L_{\mu}] \\
    &= \Pr[B_{\gamma} | L_{\gamma}]  \cdot \Pr_{\ell}[L_{\gamma}]
    + \sum_{\mu = \gamma+1}^{\infty} \Pr[B_{\gamma} | L_{\mu}]  \cdot \Pr_{\ell}[L_{\mu}] \\
    &= 2^{-\gamma} h(1) 2^{-\gamma} + \sum_{\mu = \gamma+1}^{\infty} 2^{-\gamma} (1/2) h(1) 2^{-\mu} \\
    &= h(1) 2^{-\gamma} (2^{-\gamma} + \sum_{\mu = \gamma+1}^{\infty} (1/2) 2^{-\mu}) \\
    &= h(1) 2^{-\gamma} (2^{-\gamma} + (1/2) \frac{2^{-(\gamma+1)}}{1/2}) \\
    &= h(1) 2^{-\gamma} \cdot 3 \cdot 2^{-(\gamma+1)} \\
    &= 3 h(1) 2^{-(2\gamma + 1)} \\
    &= \frac{6}{7} \cdot 4^{-\gamma}.
  \end{align*}

  Next we upper bound $\sum_{\mu = \gamma}^{\infty} \Pr[B_{\gamma} |
    L_{\mu}] \cdot \Pr_r[L_{\mu}]$:
  \begin{align*}
    \sum_{\mu = \gamma}^{\infty} & \Pr[B_{\gamma} | L_{\mu}]  \cdot \Pr_r[L_{\mu}] \\
    &= 2^{-\gamma}  \cdot \Pr_r[L_{\gamma}] + \sum_{\mu = \gamma+1}^{\infty} 2^{-\gamma} (1/2) \cdot \Pr_r[L_{\mu}] \\
    &= 2^{-\gamma} \cdot (\Pr_r[L_{\gamma}] + (1/2) \sum_{\mu = \gamma+1}^{\infty} \Pr_r[L_{\mu}]).
  \end{align*}

  To upper bound the above expression, observe that
  \begin{equation*}
  \sum_{\mu=\gamma}^{\infty} \Pr_r[L_{\mu}] \leq R.
  \end{equation*}
  It is clear that allocating all of this probability mass to
  $L_{\gamma}$ maximizes the above expression:
  \begin{equation*}
    \sum_{\mu = \gamma}^{\infty} \Pr[B_{\gamma} | L_{\mu}] \cdot
    \Pr_r[L_{\mu}] \le 2^{-\gamma} \cdot (R + (1/2) \sum_{\mu =
      \gamma+1}^{\infty} 0) = R 2^{-\gamma}
  \end{equation*}
  We cannot ensure that $\sum_{\mu= \gamma}^{\infty}
  \Pr_r[L_{\gamma}]$ is positive for $\gamma > 0$, because all of $R$
  could be allocated to $\Pr_r[L_0]$. Hence the best lower bound we
  can expect here is $\sum_{\mu = \gamma}^{\infty} \Pr[B_{\gamma} |
    L_{\mu}] \cdot \Pr_r[L_{\mu}] \geq 0$. \qed

  %

\end{proof}

\subsection{Shift Model Proofs} \label{app:shiftproofs}

\begin{proof} [(Corollary \ref{cor:shift})]
  For general $n$, it suffices to show that $\prod_{i=1}^{n-1}
  (1-2^{-(n+1-i)}) \geq \frac{1}{2}$.
  \begin{align*}
    \prod_{i=1}^{n-1} (1-2^{-(n+1-i)}) &= \prod_{i=2}^n (1 - 2^{-i}) \\
    &= \prod_{i=2}^n \frac{1}{1 + \frac{2^{-i}}{1 - 2^{-i}}} \\
    &\geq \prod_{i=2}^n \frac{1}{\exp(\frac{2^{-i}}{1 - 2^{-i}})} \\
    &= \exp \left(- \sum_{i=2}^n \frac{2^{-i}}{1 - 2^{-i}} \right) \\
    &\geq \exp \left(- \sum_{i=2}^n \frac{2^{-i}}{1 - 2^{-2}} \right) \\
    &= \exp \left( - \frac{4}{3} \cdot \frac{1}{4} \sum_{i=0}^{n-2} 2^{-i} \right) \\
    &\geq \exp \left(- \frac{2}{3} \right) \\
    &> \frac{1}{2}.
  \end{align*}

  To check the value of $c(2)$, we simply plug in $n=2$.  That is,
  $\frac{2}{(1-2^{-(2+1-1)})} = \frac{8}{3}$.
\end{proof}

\subsection{Proofs of Final Theorems}
\label{app:finalthms}

\begin{proof} [(Theorem \ref{thm:identical})]
  Recall from Corollary \ref{cor:shift} that
  \begin{equation*}
    \Pr[A(\Gamma)] = c(n) \cdot 2^{-\binom{n+1}{2}} \cdot
    \sum_{\sigma \in S_n} \prod_{i=1}^{n-1} 2^{-(n-i) \gamma_{\sigma(i)}}.
  \end{equation*}
  Our goal is to average over the summation of permutations. Since we
  are treating $\Gamma$ as a random variable,
  \begin{align*}
    \Pr[A(\Gamma)] &= \E_{\Gamma} [c(n) \cdot 2^{-\binom{n+1}{2}} \cdot
      \sum_{\sigma \in S_n} \prod_{i=1}^{n-1} 2^{-(n-i) \gamma_{\sigma(i)}}] \\
    &= c(n) \cdot 2^{-\binom{n+1}{2}} \cdot
    \sum_{\sigma \in S_n} \E_{\Gamma} [\prod_{i=1}^{n-1} 2^{-(n-i) \gamma_{\sigma(i)}}]
  \end{align*}
  Then
  \begin{equation*}
    \E_{\Gamma} [\prod_{i=1}^{n-1} 2^{-(n-i) \Gamma_{\sigma(i)}}]
    = \sum_{\Gamma} \prod_{i=1}^{n-1} 2^{-(n-i) \Gamma_{\sigma(i)}} \cdot \Pr[B_{\Gamma}].
  \end{equation*}
  Let $\sigma(\Gamma) : \mathbb{N}^{n} \rightarrow \mathbb{N}^{n}$ be
  the operation mapping $\Gamma$ to $\Gamma'$ with entries permuted by
  $\sigma$. Define the inverse $\sigma^{-1}(\Gamma)$ accordingly. Then
  note
  \begin{multline*}
    \sum_{\Gamma} \prod_{i=1}^{n-1} 2^{-(n-i) \Gamma_{\sigma(i)}} \cdot \Pr[B_{\Gamma}] \\
    = \sum_{\Gamma'} \prod_{i=1}^{n-1} 2^{-(n-i) \Gamma'_{i}} \cdot \Pr[B_{\sigma^{-1}(\Gamma')}].
  \end{multline*}
  But because threads are distributed identically, the distribution of $\Gamma$
  is symmetric over any ordering $\sigma$: $\Pr[B_{\Gamma}] = \prod_{k=1}^{n} \Pr[B^{(k)}_{\Gamma_k}]
  = \prod_{k=1}^{n} \Pr[B^{(k)}_{\sigma(\Gamma)_k}] = \Pr[B_{\sigma(\Gamma)}]$.

  Hence we may write
  \begin{align*}
    \sum_{\Gamma'} \prod_{i=1}^{n-1} 2^{-(n-i) \Gamma'_{i}} &\cdot \Pr[B_{\sigma^{-1}(\Gamma')}] \\
    &= \sum_{\Gamma'} \prod_{i=1}^{n-1} 2^{-(n-i) \Gamma'_{i}} \cdot \Pr[B_{\Gamma'}] \\
    &= \E_{\Gamma'} [\prod_{i=1}^{n-1} 2^{-(n-i) \Gamma'_{i}}].
  \end{align*}
  There are $n!$ permutations, hence this proves the claim.
\end{proof}

\begin{proof} [(Theorem \ref{thm:2threads})]
  First observe that for any $\Gamma$ consisting of two segments, by
  Corollary \ref{cor:shift},
  \begin{align*}
    \Pr[A(\Gamma)] &= c(2) \cdot 2^{-\binom{2+1}{2}} \cdot \sum_{\sigma}
    \prod_{i=1}^{2-1} 2^{-(2-i) \gamma_{\sigma(i)}} \\
    &= \frac{8}{3} \cdot 2^{-3} \cdot \sum_{\sigma} 2^{- \gamma_{\sigma(i)}} \\
    &= \frac{1}{3} \cdot (2^{-\gamma_1} + 2^{-\gamma_2}).
  \end{align*}
  We then let $\Gamma$ be distributed as the critical windows of two reordered
  copies of an identical random program.
  \begin{equation*}
  \Pr[A] = \E_{\Gamma} [\Pr[A(\Gamma)]]
  = \E_{\Gamma} [\frac{1}{3} \cdot (2^{-\Gamma_1} + 2^{-\Gamma_2})]
  = \frac{2}{3} \cdot \E_{\Gamma} [2^{-\Gamma_1}].
  \end{equation*}
  Note that $B_{\gamma}$ is the event that $\gamma$ instructions end
  up between the critical $\ld$ and $\st$ exclusively, yet the
  critical window includes the critical $\ld$ and $\st$. Hence
  \begin{equation*}
  \E_{\Gamma} [2^{-\Gamma_1}] = \sum_{k=2}^{\infty} 2^{-k} \cdot \Pr[B_{k-2}].
  \end{equation*}

  \textbf{Sequential Consistency: } We first analyze the probability
  of bug manifestation in sequential consistency, the strictest of all
  memory models.  In sequential consistency, \emph{no} thread ever
  reorders. Hence no instructions ever appear between the critical
  $\ld$ and $\st$ in a given thread.  Thus
  \begin{equation*}
  \Pr[A] = \frac{2}{3} \cdot \E_{\Gamma} [2^{-\Gamma_1}]
  = \frac{2}{3} \cdot \frac{1}{4} = \frac{1}{6}.
  \end{equation*}

  \textbf{Weak Ordering: } Under the weakest memory model,
  instructions have a chance to bubble up regardless of whether its
  preceding instruction is a $\ld$ or $\st$. For this reason, the
  final size of the critical window is \emph{independent} of the
  original program, as it was in sequential consistency.

  Recall from Theorem \ref{thm:crit} that under Weak Ordering,
  \begin{equation*}
    \Pr[B_t] = \frac{2^{-t}}{3}
  \end{equation*}
  if $t > 0$, and $\Pr[B_t = 0] = \frac{2}{3}$. Thus
  \begin{align*}
    \E[2^{- \Gamma_1}] &= \sum_{t = 2}^{\infty} \Pr[B_{t-2}] \cdot 2^{-t} \\
    &= (2/3) \cdot 2^{-2} + \sum_{t = 3}^{\infty} \frac{2^{-(t-2)}}{3} \cdot 2^{-t} \\
    &= 2/12 + \frac{4}{3} \cdot \sum_{t = 3}^{\infty} 4^{-t} \\
    &= 1/6 + \frac{4}{3} \cdot \frac{1}{64} \cdot \frac{4}{3} \\
    &= \frac{7}{36}
  \end{align*}
  Then
  \[
  \Pr[A] = \frac{2}{3} \cdot \frac{7}{36} = \frac{7}{54}.
  \]
  Note that the probability of not manifesting has indeed decreased
  from sequential consistency. $\frac{1/6}{7/54} = 9/7.$
  Correct behavior is somewhat more likely than under sequential consistency.

  \textbf{Total Store Order: }
  We take advantage of the symmetry for $n=2$.
  We need not characterize the joint distribution of the lengths of two critical
  windows, because only the starting position of the lower window matters.

  Recall that Theorem \ref{thm:crit} shows that:
  \begin{equation*}
    \Pr[B_0] = \frac{2}{3},
  \end{equation*}
  and
  \begin{equation*}
    \Pr[B_{\gamma}] = \frac{6}{7} \cdot 4^{-\gamma} + R(\gamma) \cdot 2^{-\gamma},
  \end{equation*}
  for some positive $R(\gamma) < \frac{2}{21}$.

  Hence
  \begin{align*}
    \E[2^{- \Gamma_1}] &= \sum_{t = 2}^{\infty} \Pr[B_{t-2}] \cdot 2^{-t} \\
    &= (2/3) \cdot 2^{-2} \\
    & \quad + \sum_{t = 3}^{\infty} \left( \frac{6}{7} \cdot 4^{-(t-2)} + R(t-2) \cdot 2^{-(t-2)} \right) \cdot 2^{-t} \\
    &= (1/6) + \left( \frac{6}{7} \cdot 16 \cdot 8^{-3} \cdot \frac{8}{7} \sum_{t=0}^{\infty} 8^{-t} \right) \\
    & \quad + \left( 4 \sum_{t=3}^{\infty} R(t-2) 4^{-t} \right) \\
    &= (1/6) + \frac{3}{98} + 4 \sum_{t=3}^{\infty} R(t-2) 4^{-t}.
  \end{align*}
  Plugging in $R(t) > 0$ gives
  \[
  \Pr[A] = \frac{2}{3} \cdot \E[2^{- \Gamma_1}] > \frac{2}{3} (\frac{1}{6} + \frac{3}{98}) = 58/441 > 0.1315.
  \]
  Similarly, plugging in $R(t) < 2/21$ gives
  \begin{align*}
    \Pr[A] &= \frac{2}{3} \cdot \E[2^{- \Gamma_1}] \\
    &< 58/441 + \frac{2}{3} \cdot (4 \cdot \frac{2}{21} \cdot 4^{-3} \cdot \frac{4}{3}) \\
    &= 58/441 + 1/189 \\
    &< 0.1369.
  \end{align*}

  We now see that with two threads, the probability of reliable
  execution is substantially closer to that of weak ordering (0.1296)
  than that of sequential consistency (0.1666).
\end{proof}

\begin{proof} [(Theorem \ref{thm:nthreads})]
  We first analyze the probability for Sequential Consistency. As the
  strongest memory model, the probability of successful execution
  serves as an upper bound for every other model. This is because the
  likelihood that the shift process results in an overlap is
  monotonically increasing in the distribution of critical window
  size.

  \textbf{Sequential Consistency: }
  Again, recall from Corollary \ref{cor:shift} that
  \begin{align*}
    \Pr[A] &= \E_{\Gamma} [\Pr[A(\Gamma)]] \\
      &= c(n) \cdot 2^{-\binom{n+1}{2}} \cdot \sum_{\sigma}
      \E_{\Gamma} \left[\prod_{i=1}^{n-1} 2^{-(n-i) \gamma_{\sigma(i)}}\right].
  \end{align*}
  Under sequential consistency, $\gamma_{\sigma(i)}=2$ always. Hence
  \begin{align*}
    \Pr[A] &= c(n) \cdot 2^{-\binom{n+1}{2}}
    \cdot \sum_{\sigma} \prod_{i=1}^{n-1} 2^{-(n-i) 2} \\
    &= c(n) \cdot 2^{-\binom{n+1}{2}} \cdot n! \cdot 2^{-2\binom{n}{2}} \\
    &= 2^{-n^2 (3/2 + o(1))},
  \end{align*}
  where the last line follows from Stirling's formula:
  \begin{align*}
    n! &= \sqrt{2 \pi n} (\frac{n}{e})^{n} (1 + o(1)) \\
    &= \exp \left( \frac{\ln (2 \pi)}{2} + \frac{\ln n}{2} + (\ln n - 1) n \right) (1+o(1)) \\
    &= e^{(n \ln n) (1+ o(1))} \\
    &= e^{n^2 \cdot o(1)}.
  \end{align*}

  \textbf{Other Models: } Surprisingly, to achieve the same bound for
  any model, all we need is a lower bound on the probability of
  generating a small critical window.

  \begin{claim}
    In every memory model,
    \[
    \Pr[B_{0}] \geq \frac{1}{2}.
    \]
  \end{claim}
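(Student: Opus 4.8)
The plan is to avoid characterizing the full distribution of the critical window and instead exhibit a single, easily-analyzed sub-event of $B_0$ whose probability is already at least $\frac{1}{2}$ in \emph{every} model. The right sub-event is ``the critical $\ld$ never moves during its settling round.'' First I would recall the geometry established in the proof of Theorem~\ref{thm:crit}: the critical $\st$ is the last instruction to settle, it begins at the bottom of the order, and it can never swap past the critical $\ld$ because the two share an address. Consequently the critical $\st$ can only travel \emph{upward}, toward the critical $\ld$; it can never insert an instruction between them, so it can only shrink the window, never grow it. This gives the key containment: the window is empty as soon as the critical $\ld$ stays in place, since then the critical $\st$ already sits immediately below it and cannot move away. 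In symbols, $\{\text{critical }\ld\text{ does not move}\} \subseteq B_0$.

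Next I would bound the probability that the critical $\ld$ stays put. In its settling round it makes a first swap attempt against whatever instruction lies directly above it; by the definition of the settling process (cf.\ $\rho_{\tau_1,\tau_2}$) this swap succeeds with probability either $0$ (if the model forbids that reorder) or exactly $s=\frac{1}{2}$. In either case the first swap \emph{fails} with probability at least $1-s=\frac{1}{2}$, and a failure halts the round, pinning the critical $\ld$ in place. Combining this with the containment above yields
\[
\Pr[B_0] \;\geq\; \Pr[\text{first swap of the critical }\ld\text{ fails}] \;\geq\; 1-s \;=\; \frac{1}{2},
\]
uniformly over all memory models, since the bound on the swap probability is model-independent.

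I expect the only subtle point---and hence the main thing to get right---is the containment $\{\text{critical }\ld\text{ does not move}\}\subseteq B_0$, which rests on the two structural facts that the critical $\st$ settles \emph{after} the critical $\ld$ and can never overtake it. Once those are in hand the bound is immediate, because the settling process caps every individual swap probability at $s=\frac{1}{2}$; no model-specific case analysis (unlike the delicate TSO argument of Theorem~\ref{thm:crit}) is required. As a sanity check, the $\frac{1}{2}$ bound sits comfortably below the exact values $\Pr[B_0]=2/3$ computed for Weak Ordering and Total Store Order and $\Pr[B_0]=1$ for Sequential Consistency, and it is all that is needed for the asymptotic argument of Theorem~\ref{thm:nthreads}.
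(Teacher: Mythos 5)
Your proof is correct and is essentially the paper's own argument: the paper likewise observes that the critical window can only grow if the critical $\ld$ moves up past the instruction above it, which happens with probability at most $s = \frac{1}{2}$ regardless of the memory model. Your write-up merely makes explicit the containment $\{\text{critical }\ld\text{ does not move}\} \subseteq B_0$ and the structural facts behind it (the critical $\st$ settles last and can never pass the critical $\ld$), which the paper leaves implicit.
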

  \begin{proof}
    This can be observed by the fact that no matter the model, the
    critical $\ld$ must move up to have any chance of the critical
    window growing. But even if the critical $\ld$ is allowed to pass
    the instruction above it, this only occurs with probability 1/2.
  \end{proof}
  The claim has the following consequences.
  \[
  \Pr[\bigwedge_{i=1}^{n-1} \gamma_{i} = 2] \geq 2^{-(n-1)},
  \]
  hence
  \[
  \Pr[\prod_{i=1}^{n-1} 2^{(n-i) \gamma_{\sigma(i)}} = 2^{-2\binom{n}{2}}] \geq 2^{-(n-1)},
  \]
  thus
  \[
  \E[\prod_{i=1}^{n-1} 2^{(n-i) \gamma_{\sigma(i)}}] \geq 2^{-2\binom{n}{2} - (n-1)}.
  \]
  Plugging this expectation into the probability of correct execution
  again gives
  \[
  \Pr[A] > c(n) \cdot 2^{-\binom{n+1}{2}} \cdot n! \cdot 2^{-2\binom{n}{2} - (n-1)}
  = 2^{-n^2 (3/2)}.
  \]
  Recall that Sequential Consistency offers the largest probability of
  correct execution of any model. Hence upper bounding the above value
  by the probability for Sequential Consistency, we have
  \[
  \Pr[A] < 2^{-n^2 (3/2 + o(1))}.
  \]

  This completes the proof. \qed
\end{proof}
}
{}

\end{document}